\documentclass[11pt]{article}
\usepackage{fullpage}
\usepackage{graphicx}
\usepackage{amsmath}
\usepackage{amsfonts}
\usepackage{amssymb}
\usepackage{float}
\usepackage{mathrsfs}
\usepackage[margin=1in]{geometry}
\usepackage{bbm} 
\usepackage{soul}
\usepackage[page]{appendix}
\usepackage[round]{natbib}
\usepackage{amsthm}
\usepackage{tikz}
\usetikzlibrary{positioning, arrows.meta}
\usepackage{framed}
\usepackage{mathtools}
\setlength{\parindent}{0pt}  
\setlength{\parskip}{2mm}    
\usetikzlibrary{positioning}

\newcommand{\ind}{\perp\!\!\!\!\perp} 

\newcommand{\keywords}[1]
{
  \small	
  \textbf{\textit{Keywords--}} #1
}

\newtheorem{theorem}{Theorem}
\newtheorem{assumption}{Assumption}
\newtheorem{proposition}{Proposition}
\newtheorem{corollary}{Corollary}

\newtheorem{remark}{Remark}

\title{\bf A Sensitivity Analysis Framework for Causal Inference Under Interference}
\date{First Version: November 26, 2025; Current Version:  June 30, 2026}
\author{Matvey Ortyashov  \qquad AmirEmad Ghassami \\ Department of Mathematics and Statistics, Boston University}
\begin{document}
\maketitle

\begin{abstract}
   In many applications of causal inference, the treatment received by one unit may influence the outcome of another, a phenomenon referred to as interference. Although there are several frameworks for conducting causal inference in the presence of interference, practitioners often lack the data necessary to adjust for its effects. In this paper, we propose a weighting-based sensitivity analysis framework that can be used to assess the systematic bias arising from ignoring interference. Unlike most of the existing literature, we allow for the presence of unmeasured confounding, and show that the combination of interference and unmeasured confounding is a notable challenge to causal inference. We also study a third factor contributing to systematic bias: lack of transportability. Our framework enables practitioners to assess the impact of these three issues simultaneously through several easily interpretable sensitivity parameters that can reflect a wide range of intuitions about the data.
\end{abstract}
\keywords{causal inference, sensitivity analysis, interference, unmeasured confounding, transportability, spillover effects}

\section{Introduction}\label{secintroduction}
Researchers in causal inference frequently invoke the assumption of no interference, which states that the treatment received by any unit should only affect the outcome of that unit \citep{Rubin1980}. Due to the widespread adoption of this assumption, most literature ignores the presence of \textit{spillover effects}, which occur when one unit's treatment may ``spill over," and have a causal impact on (i.e., interfere with) another's outcome. Ignoring spillover effects can result in systematic bias, a form of bias that cannot be mitigated by increasing the sample size. Concerns about the bias introduced by spillover effects are particularly relevant in fields where interference is common, such as the study of infectious diseases \citep{Vanderweele2015}. 

While the role of ignored interference in creating systematic bias is widely recognized, few authors have explicitly explored the structure, the components, and the extent of this bias. As a result, notwithstanding a proliferation of recent research concerned with proper estimation strategies in the presence of observed interference \citep{Liu2016, Forastiere2021, Lee2023, McNealis2024, Tortú2024, Papadogeorgou2023, Forastiere2024}, the subject of sensitivity analysis for ignored interference has hitherto been underexplored. Despite the limited amount of literature concerning the topic, sensitivity analyses can play an important role in the presence of spillover effects. In many settings, practitioners may incorrectly assume that there are no spillover effects within a given set of data or, due to cost or privacy concerns, may lack the ability to learn the structure of the interference present in the data. A sensitivity analysis framework would enable researchers facing these issues to evaluate how ignoring interference impacts their estimates and gauge the strength of interference necessary to change their findings. 

In this paper, we study the structure of the bias that arises due to failing to adjust for various complexities introduced by the presence of interference. Unlike the majority of the existing literature on interference (with some exceptions, such as \cite{Chen2025}, \cite{Wu2025}, and \cite{Khot2025}), we allow for the presence of unobserved confounders. Importantly, we will demonstrate that if interference is present, the impacts of unmeasured confounding on the bias differ from those that exist without interference, making the combination of unmeasured confounding and ignored interference a distinct challenge to causal inference. Using our bias decomposition, we propose a sensitivity analysis framework that parametrizes the bias through several easily interpretable sensitivity parameters that reflect a practitioner's beliefs about factors such as the strength of interference effects and unmeasured confounding. Our analyses also account for another data complexity that can occur alongside interference: the presence of undefined potential outcomes. 

Furthermore, we consider settings where the researcher aims to \emph{transport} their causal findings from a reference domain to a target domain. For this case, we investigate a third source of bias: lack of transportability. Causal transportability, alongside several concepts closely related to it, such as external validity, generalizability, data fusion, and transfer learning, has garnered significant attention over the past several years \citep{Bareinboim2016, Mitra2022, Huang2024, Degtiar2023, Colnet2024,Vuong2025}. As discussed by \cite{Buchanan2023} as well as \cite{Bhadra2025}, though there is a wide range of considerations about transporting causal effects in the presence of interference (one of which is that transportability requires similar patterns of interference between populations, as noted by \cite{Hernán2011}), few authors have explicitly examined them. We study how our causal estimands change between different populations, and extend our sensitivity analysis to account for lack of transportability.  

To the best of our knowledge, our paper is the first to develop a comprehensive sensitivity analysis framework that allows practitioners to simultaneously account for the bias that comes from ignored interference, unmeasured confounding, and lack of transportability. Although \cite{Forastiere2021} discuss the bias of an estimator that does not adjust for interference and unmeasured confounding, they do not present sensitivity parameters or offer a structured approach that can be employed by a practitioner wishing to calculate the bias of their naive estimator. To date, only \cite{Vanderweele2015} have developed a formal sensitivity analysis framework for unmeasured confounding in the presence of interference, though their methodology does not account for ignored interference or lack of transportability. Unlike their work, our method not only incorporates the aforementioned data complexities, but also does not require strong assumptions about the outcome generating process, the distribution of the unmeasured confounder, or the selection bias. However, if one is willing to make certain assumptions, we demonstrate that our bias decomposition can be significantly simplified. Our bias decomposition generalizes previous weighting-based results, and is related to the approaches of \cite{Shen2011}, \cite{Hong2021}, and \cite{Huang2024}.

The rest of this paper is organized as follows. Section \ref{problem} describes the problem setting. In Section \ref{sensitivity}, we provide the main theoretical results for bias arising from unmeasured confounding and interference. Section \ref{sectiontransportability} extends the results of Section \ref{sensitivity} to the task of causal transport. Section \ref{undefined} extends the results of Sections \ref{sensitivity} and \ref{sectiontransportability} to undefined potential outcomes. We conclude in Section \ref{sectionconclusion}. All of the proofs are provided in the Appendix.

\section{Problem Description}\label{problem}
Consider a setting in which unit $i \in \mathcal{I}$ receives treatment $A_i \in \{0,1\}$ and has an observed outcome $Y_i$. Without interference, unit $i$'s outcome depends only on unit $i$'s treatment; however, in the presence of interference, $Y_i$ may also depend on the treatments received by other units \citep{Rubin1974, Rubin1990}. We refer to those units whose treatments affect $Y_i$ as the \textit{neighbors} of unit $i$. We let $\mathcal{N}_i$ denote the set of indices of unit $i$'s neighbors, and let the vector $\mathbf{A}_{\mathcal{N}_i}$ represent their treatments. We do not impose any restrictions on which units can belong to $\mathcal{N}_i$. Thus, our approach allows for more flexibility than the common setting of partial interference, in which units are partitioned into blocks, and interference can only occur within blocks, not across them \citep{Sobel2006}. We adopt the potential outcomes framework, where units' outcomes under different realizations of the treatments are considered distinct random variables \citep{Rubin1974}. Specifically, if $A_i$ is set to $a$ and $\mathbf{A}_{\mathcal{N}_i}$ is set to $\mathfrak{a}$, we write the potential outcome of unit $i$ as $Y_i^{(a, \mathfrak{a})}$. Furthermore, we let  $\mathbf{X}_i$ and $\mathbf{U}_i$ denote the observed and unobserved pre-treatment covariates respectively. For the context of causal transport, we use the variable $S_i$ to demarcate units in the reference population ($S_i = 1$) from those in the target population ($S_i = 2$). We assume that our observational data only contains units with $S_i = 1$, and that the two populations may differ in terms of covariates and treatment generating mechanisms. 

Under neighborhood interference, a unit with $n_i$ neighbors may have $2^{n_i+1}$ distinct potential outcomes. Nevertheless, in a majority of applications, many of these potential outcomes would equal each other. Thus, we assume that there exists a scalar-valued function $g_i(\mathbf{A}_{\mathcal{N}_i}) = G_i$ which acts as a summary of the treatments received by unit $i$'s neighbors and reduces the number of potential outcomes we need to consider. This assumption is formalized below.
\begin{assumption}[Outcome Exposure Mapping]\label{neighint}
For all $i \in \mathcal{I}$, there exists a function $g_i: \{0,1\}^{n_i} \to \mathbb{R}$ such that for any two different sets of treatments received by a unit's neighbors, $\mathfrak{a}_1$ and $\mathfrak{a}_2$, we have that
\begin{equation*}
    Y_i^{(a, \mathfrak{a}_1)} = Y_i^{(a, \mathfrak{a}_2)}
\end{equation*}
for $a \in \{0,1\}$, as long as $g_i(\mathfrak{a}_1) = g_i(\mathfrak{a}_2)$. We assume that $g_i(\mathbf{A}_{\mathcal{N}_i})$ is discrete, and that, without loss of generality, for all $i \in \mathcal{I}$, $g_i(\mathbf{A}_{\mathcal{N}_i}) \in \{0, 1, ...,  g_{\text{max}} \}$. Moreover, we assume that $g_i$ only depends on $i$ through the dimension of the domain, and that $ g_{\text{max}} < \max_i 2^{n_i}$.
\end{assumption}
Effectively, Assumption \ref{neighint} states that if $A_i = a$ and $G _i= g$, the potential outcome can be written as $Y_i^{(a, g)}$. The function $g_i$ is a particular case of the \textit{exposure mapping} described by \cite{Aronow2017}. We call function $g_i$ the \textit{outcome exposure mapping}, $G_i$ the \textit{neighborhood treatment}, and $A_i$ the \textit{personal treatment} of unit $i$.\footnote{The exposure mapping assumption is not the only way to define potential outcomes in the presence of interference: alternative definitions, typically based on a given treatment allocation strategy, have also been posited in the literature \citep{TchetgenTchetgen2012, Liu2016, Lee2023}.} Throughout the rest of this paper, we drop the subscript $i$ and refer to generic instances of the aforementioned variables without the subscript. 

Next, we propose a version of the consistency assumption, which relates the unit's observed outcome, $Y$, to its potential outcomes.
\begin{assumption}[Consistency under Interference] \label{consistency}
The potential outcome $Y^{(a)}$ satisfies $Y^{(a)} = Y^{(a,G)}= \sum_{g = 0}^{ g_{\text{max}}} \mathbb{I}(G=g) \cdot  Y^{(a,g)}$. Moreover, the observed outcome $Y$ satisfies $Y =  \sum_{a = 0}^{1} \mathbb{I}(A=a) \cdot  Y^{(a)}$.
\end{assumption}

Taken together, the two components of Assumption \ref{consistency} imply that if a unit receives personal treatment $a$ and neighborhood treatment $g$, its outcome equals $Y^{(a, g)}$, i.e., $Y = \sum_{a =0}^1\sum_{g = 0}^{ g_{\text{max}}} \mathbb{I}(A = a, G=g) \cdot Y^{(a,g)}$. In addition to Assumptions \ref{neighint} and \ref{consistency}, we impose one ancillary constraint to develop tractable bias decompositions in Sections \ref{sensitivity} and \ref{sectiontransportability}. 
\begin{assumption}[Well-Defined Potential Outcomes] \label{assumptionwelldefined} The potential outcomes $Y^{(a)}$ and $Y^{(a,g)}$ for $a \in \{0,1\}$, $g \in \{0, 1, ..., g_{\text{max}} \}$ are well-defined for every unit.
\end{assumption}
In general, Assumption \ref{assumptionwelldefined} does not necessarily hold in the presence of interference, and thus, is not required for our bias decomposition. Nevertheless, we introduce  Assumption \ref{assumptionwelldefined} in order to avoid unnecessary complexities in notation that do not employ any new methodological tools and do not offer any additional insights into our main results. In Section \ref{undefined}, we explore cases in which Assumption \ref{assumptionwelldefined} may be violated, and extend our sensitivity analysis to such settings.

\subsection{Estimands}\label{estimands}
In this paper, we have two primary parameters of interest: the \textit{natural average main effect}, denoted by $\phi_{1}$, and the \textit{transported natural average main effect}, denoted by $\phi_{2}$. In order to describe these parameters, we begin by considering the \textit{controlled individual main effect}, $\tau(g)$:
\begin{equation*}
\tau(g) := Y^{(1,g)} - Y^{(0,g)} .
\end{equation*}
 $\tau(g)$ represents the individual causal effect of changing $A$ when $G$ is set to $g$, and can be viewed as a direct controlled effect \citep{Robins1992, VanderWeele2011Direct}. In addition to $\tau(g)$, for each unit, we can define the \textit{natural individual main effect}, $\kappa$, which measures the causal impact of changing $A$ while $G$ equals the actual value of neighborhood treatment received by the unit:
 \begin{equation*}
\kappa := Y^{(1, G)}-Y^{(0, G)} =Y^{(1)}-Y^{(0)}  = \sum_{g = 0}^{ g_{\text{max}}} \mathbb{I}(G=g) \cdot \tau(g).
\end{equation*}
We can then define both of our parameters of interest as averages of $\kappa$, one taken over the reference population ($S=1$), the other over the target population ($S=2$):
\begin{equation*}
\phi_{s} := \mathbb{E}\big[\underbrace{Y^{(1, G)}-Y^{(0, G)}}_{\kappa} |S=s\big] = \mathbb{E}\big[Y^{(1)}-Y^{(0)} |S=s\big]  = \sum_{g = 0}^{ g_{\text{max}}} \mathbb{E}\big[\mathbb{I}(G=g) \cdot \tau(g) |S=s\big].
\end{equation*}
As the average difference between $Y^{(1,G)}$ and $Y^{(0, G)}$, $\phi_{s}$ can be interpreted as measuring the impact of changing $A$ from 0 to 1 while $G$ is allowed to vary as it naturally does in the population where $S=s$.

\begin{remark}
Though we have hitherto discussed two parameters that measure the causal effect of changing the personal treatment $A$, we can also define $\textbf{spillover}$ \textbf{effects}, which measure the causal impact of changing the neighborhood treatment $G$. Similarly to the natural individual main effect, we can define the \textbf{natural individual spillover effect} at $A=a$, denoted by $\gamma(a)$:
\begin{equation*}
\gamma(a) = Y^{(a,G)}-Y^{(a,0)} = Y^{(a)}-Y^{(a,0)} =  \sum_{g=0}^{ g_{\text{max}}} \bigg\{\mathbb{I}(G=g) \cdot  Y^{(a,g)} \bigg\}- Y^{(a,0)}.
\end{equation*}
 $\gamma(a)$ measures the causal spillover effect for any given unit by comparing the potential outcome where $A=a$ and $G$ is set to the actual neighborhood treatment received by that unit with the potential outcome where $A=a$ and $G$ is set to 0. The \textbf{individual total effect} can be defined in terms of the individual natural main and spillover effects:
\begin{align*}
\text{Individual Total Effect} &= Y^{(1,G)} - Y^{(0,0)} = Y^{(1)} - Y^{(0,0)}  = \kappa + \gamma(0)
\end{align*}
While we introduce spillover effects in this section,  our focus in this work remains developing a sensitivity analysis framework for $\phi_{1}$ and $\phi_{2}$. Thus, we only discuss spillover effects insofar as they figure into our bias decompositions for the average natural main effects. 
\end{remark}

In settings where all of the relevant variables are observed, the requirements of conditional exchangeability and positivity are typically posited to allow for identification. However, since we wish to develop a framework that allows for the existence of unobserved confounders and lack of transportability, we will adjust these assumptions. Before doing so, we differentiate between several different categories of covariates in our setting. Namely, we say that $\{\mathbf{X}_{AY}, \mathbf{U}_{AY}\}$ contains all common causes of $A$ and $Y$, $\{\mathbf{X}_{GY}, \mathbf{U}_{GY}\}$ contains all common causes of $\mathbf{A}_{\mathcal{N}}$ and $Y$, $\{\mathbf{X}_{AG}, \mathbf{U}_{AG}\}$ contains all common causes of $A$ and $\mathbf{A}_{\mathcal{N}}$, $\{\mathbf{X}_{AS}, \mathbf{U}_{AS}\}$ contains all common causes of $A$ and $S$, $\{\mathbf{X}_{GS}, \mathbf{U}_{GS}\}$ contains all common causes of $\mathbf{A}_{\mathcal{N}}$ and $S$, and $\{\mathbf{X}_{SY}, \mathbf{U}_{SY}\}$ contains all common causes of $Y$ and $S$. Using this notation, we posit the following modified versions of positivity and conditional exchangeability under interference.

\begin{assumption}[Positivity]\label{positivity}
For $a \in \{0,1\}$ and $s \in \{1,2\}$ we have, almost surely in $( \mathbf{X}, \mathbf{U})$ and $(G, \mathbf{X}, \mathbf{U})$
\begin{equation*}
p(A=a|S = s,\mathbf{X}, \mathbf{U}) > 0  \text{ and } p(S=s|G,\mathbf{X}, \mathbf{U}) > 0.
\end{equation*}
\end{assumption}
\begin{assumption}[Weak Conditional Exchangeability]\label{exchangeability}
For $a \in \{0, 1\}$, we have
\begin{equation*}
Y^{(a)}  \ind A| S, \tilde{\mathbf{X}},\tilde{\mathbf{U}}
\end{equation*}
where $\tilde{\mathbf{X}} =  \mathbf{X}_{AY} \cup \mathbf{X}_{AG} \cup \mathbf{X}_{AS}$ and $\tilde{\mathbf{U}} =  \mathbf{U}_{AY} \cup \mathbf{U}_{AG}  \cup \mathbf{U}_{AS}$ reflect all covariates which are the causes of $A$. 
\end{assumption}
Because Assumption \ref{exchangeability} allows for the existence of a set of unobserved covariates that ensures independence, it is significantly weaker than the version of exchangeability typically presented in the absence of unmeasured confounding.
\begin{remark}
     Assumption \ref{exchangeability} highlights three additional challenges introduced by the combination of interference and lack of transportability. 
     \begin{itemize}
         \item Under interference, one needs to condition on common causes of $A$ and $\mathbf{A}_{\mathcal{N}}$, even if these variables are not common causes of $A$ and $Y$, and do not need to be adjusted for in the absence of interference.
         \item Since the reference and target populations may differ in terms of their treatment generating mechanisms, $S$ also serves as a common cause of $A$ and $\mathbf{A}_{\mathcal{N}}$, and thus, must be adjusted for.
         \item Because the reference and target populations may differ in terms of their covariate distributions (namely, causes of $A$ and $\mathbf{A}_{\mathcal{N}}$), there is a collider structure that necessitates adjusting for common causes of $A$ and $S$.
     \end{itemize}

\end{remark}

Under Assumptions \ref{neighint}-\ref{exchangeability}, one could identify $\phi_{1}$ if they were to observe $\tilde{\mathbf{U}}$. One possible approach is to use inverse probability weighting (IPW), as in Proposition \ref{unbiasedIPW}.
\begin{proposition}[Identification Formula for $\phi_{1}$]\label{unbiasedIPW}
Under Assumptions \ref{neighint} through \ref{exchangeability}, $\phi_{1}$ can be identified as
\begin{equation*}
    \phi_{1} = \mathbb{E} \bigg[ \frac{\mathbb{I}(A=1) \cdot Y}{p(A=1|S=1,\tilde{\mathbf{X}}, \tilde{\mathbf{U}})}  - \frac{\mathbb{I}(A=0) \cdot Y}{p(A=0|S=1, \tilde{\mathbf{X}}, \tilde{\mathbf{U}})} \bigg|S=1\bigg].
\end{equation*}
\end{proposition}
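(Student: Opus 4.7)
The plan is to verify the IPW identification formula by handling each of the two expectations separately, for $a \in \{0,1\}$, and then subtracting. Fix $a$ and consider
\[
\mathbb{E}\!\left[\frac{\mathbb{I}(A=a)\cdot Y}{p(A=a|S=s,\tilde{\mathbf{X}},\tilde{\mathbf{U}})}\,\Big|\,S=s\right].
\]
First I would invoke Assumption \ref{consistency}, which gives $Y=\sum_{a'=0}^{1}\mathbb{I}(A=a')Y^{(a')}$ and hence $\mathbb{I}(A=a)\cdot Y=\mathbb{I}(A=a)\cdot Y^{(a)}$. Assumption \ref{assumptionwelldefined} ensures $Y^{(a)}$ is well-defined, and Assumption \ref{positivity} ensures that the denominator $p(A=a|S=s,\tilde{\mathbf{X}},\tilde{\mathbf{U}})$ is almost surely strictly positive, so the ratio inside the expectation is well-defined.

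Next I would apply the tower property, conditioning on $(\tilde{\mathbf{X}},\tilde{\mathbf{U}})$ inside the conditional expectation given $S=s$:
\[
\mathbb{E}\!\left[\frac{\mathbb{I}(A=a)\,Y^{(a)}}{p(A=a|S=s,\tilde{\mathbf{X}},\tilde{\mathbf{U}})}\,\Big|\,S=s\right]
= \mathbb{E}\!\left[\mathbb{E}\!\left[\frac{\mathbb{I}(A=a)\,Y^{(a)}}{p(A=a|S=s,\tilde{\mathbf{X}},\tilde{\mathbf{U}})}\,\Big|\,S=s,\tilde{\mathbf{X}},\tilde{\mathbf{U}}\right]\,\Big|\,S=s\right].
\]
Since the denominator is measurable with respect to $(S,\tilde{\mathbf{X}},\tilde{\mathbf{U}})$, it pulls out of the inner expectation. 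Then I would use Assumption \ref{exchangeability}, which gives $Y^{(a)}\ind A\mid S,\tilde{\mathbf{X}},\tilde{\mathbf{U}}$, to factor the inner conditional expectation as $\mathbb{E}[\mathbb{I}(A=a)|S=s,\tilde{\mathbf{X}},\tilde{\mathbf{U}}]\cdot\mathbb{E}[Y^{(a)}|S=s,\tilde{\mathbf{X}},\tilde{\mathbf{U}}]$. The first factor is exactly $p(A=a|S=s,\tilde{\mathbf{X}},\tilde{\mathbf{U}})$, which cancels with the denominator, leaving $\mathbb{E}[Y^{(a)}|S=s,\tilde{\mathbf{X}},\tilde{\mathbf{U}}]$.

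Taking the outer expectation and applying the tower property in reverse yields $\mathbb{E}[Y^{(a)}|S=s]$. Subtracting the $a=0$ term from the $a=1$ term and using linearity gives $\mathbb{E}[Y^{(1)}-Y^{(0)}|S=s]$, which is $\phi_s$ by the definition in Section \ref{estimands} combined with the first part of Assumption \ref{consistency} (so that $Y^{(a)}=Y^{(a,G)}$, hence $Y^{(1)}-Y^{(0)}=\kappa$). There is no real obstacle here—this is a standard IPW identification argument; the only subtlety worth flagging is that Assumption \ref{exchangeability} is formulated for $Y^{(a)}$ rather than $Y^{(a,g)}$, which is precisely why the neighborhood treatment $G$ does not need to appear in the propensity score, even though the confounding structure of interference requires $\mathbf{X}_{AG}\cup\mathbf{U}_{AG}$ to be included in $(\tilde{\mathbf{X}},\tilde{\mathbf{U}})$.
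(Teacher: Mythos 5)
Your argument is correct and is essentially the paper's own proof run in the opposite direction: the paper starts from $\mathbb{E}[Y^{(a)}|S=s]$ and expands it into the weighted form using exchangeability, consistency, and a density-ratio manipulation, while you start from the weighted functional and collapse it via the tower property and the factorization granted by Assumption \ref{exchangeability}; the ingredients and their roles are identical. No gaps.
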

Assumptions \ref{neighint}-\ref{exchangeability} are sufficiently strong to develop a bias decomposition for $\phi_{1}$. However, since none of the units from the target population are observed, reasoning about $\phi_{2}$ is inherently more challenging. Thus, in order to develop a tractable bias decomposition for $\phi_{2}$ in Section \ref{sectiontransportability}, we posit two additional conditions. First, we introduce one mild assumption regarding the independence of the personal and neighborhood treatments.

\begin{assumption}[Conditional Treatment Independence]\label{assumptionindependenceofAandG}
We have
\begin{equation*}
A \ind G |S,\tilde{\mathbf{X}}, \tilde{\mathbf{U}}
\end{equation*}
where $\tilde{\mathbf{X}}$ and $\tilde{\mathbf{U}}$ are defined as in Assumption \ref{exchangeability}.
\end{assumption}
Assumption \ref{assumptionindependenceofAandG} only introduces modest limits on our framework, as the combination of Assumptions \ref{consistency} and \ref{exchangeability} already implicitly precludes most forms of dependence between $A$ and $\mathbf{A}_{\mathcal{N}}$. In Assumption \ref{consistency}, we assume that $Y^{(a)}$ equals $Y^{(a,G)}$, as opposed to $Y^{(a,G^{(a)})}$, a condition which would typically be violated by the existence of a direct causal pathway from $A$ to $\mathbf{A}_{\mathcal{N}}$. In Assumption \ref{exchangeability}, we assume that the common causes of $A$ and $\mathbf{A}_{\mathcal{N}}$ are sufficient to close the confounding pathway between $A$ and $Y$ through $\mathbf{A}_{\mathcal{N}}$; however, this would not hold true in most settings in which direct causal pathways from $\mathbf{A}_{\mathcal{N}}$ to $A$ exist. Finally, we introduce one assumption concerning the differences between the reference and target populations.

\begin{assumption}[Conditional Ignorability of Sampling]\label{assumptionoutcomegeneratingmechanism} For any $a \in \{0, 1\}$  we have
\begin{equation*}
   Y^{(a)} \ind S | G, \mathbf{\dot{X}},\mathbf{\dot{U}}
\end{equation*}
where $\mathbf{\dot{X}} = \mathbf{X}_{GY} \cup \mathbf{X}_{SY}$ and $\mathbf{\dot{U}} =  \mathbf{U}_{GY} \cup \mathbf{U}_{SY}$. 
\end{assumption}

Conditional ignorability of sampling is a commonly presented assumption in the transportability literature which precludes the possibility of trial engagement effects \citep{Huang2024, Ung2025}. We note that the presence of interference introduces the need to condition on $G$ and the common causes of $G$ and $Y$ (due to the collider structure), not just the common causes of $S$ and $Y$ for this assumption to hold. To alleviate the more stringent requirements introduced by the presence of interference, Assumption \ref{assumptionoutcomegeneratingmechanism} allows for the existence of a set of unobserved covariates which ensures ignorability, making this a relatively non-restrictive condition.

One graphical model consistent with our assumptions is depicted in Figure \ref{fig1}. Under Assumptions \ref{neighint}-\ref{assumptionoutcomegeneratingmechanism}, one could identify $\phi_{2}$ through a weighted estimator if they were to observe $G$ and $\mathbf{\Breve{U}}$.
\begin{proposition}[Identification Formula for $\phi_{2}$ ]\label{unbiasedIPW2}
Under Assumptions \ref{neighint} through \ref{assumptionoutcomegeneratingmechanism}, $\phi_{2}$ can be identified as
\begin{equation*}
    \phi_{2} = \mathbb{E} \Bigg[ \bigg( \frac{\mathbb{I}(A=1) \cdot Y}{p(A=1|S=1,\tilde{\mathbf{X}}, \tilde{\mathbf{U}})} -\frac{\mathbb{I}(A=0) \cdot Y}{p(A=0|S=1, \tilde{\mathbf{X}}, \tilde{\mathbf{U}})} \bigg) \cdot \frac{p(S=2|G, \mathbf{\breve{X}}, \mathbf{\breve{U}}) }{p(S=1|G, \mathbf{\breve{X}}, \mathbf{\breve{U}}) } \cdot \frac{p(S=1)}{p(S=2)} \bigg|S=1 \Bigg]
\end{equation*}
where $ \mathbf{\breve{X}} = \mathbf{\dot{X}} \cup \mathbf{X}_{AS} \cup \mathbf{X}_{AG}$ and $\mathbf{\breve{U}} = \mathbf{\dot{U}} \cup \mathbf{U}_{AS} \cup \mathbf{U}_{AG}$.
\end{proposition}

\begin{figure}[t]
\centering
        \begin{tikzpicture}
        \node[] (A) at (0,4) {$A_i$};
        \node[] (Y) at (3,4) {$Y_i$};
        \draw[-stealth,] (A) to (Y);
        
        \node[] (Uay) at (0.5, 5) {$\mathbf{U}_{i,AY}$};
        \draw[-stealth, bend left] (Uay) to (Y);
        \draw[-stealth, bend right] (Uay) to (A);

        \node[] (An) at   (0,2.5) {$A_{i,1}$};
        \node[] (An2) at (0,2) {$A_{i,2}$};
        \node[] (An3) at (0,1.5) {$A_{i,3}$};
        \node[] (An4) at (0,1) {$\vdots$};
        \node[] (An5) at (0.025,0.4) {$A_{i,n_i}$};
        \node[draw, minimum width=0.75cm, minimum height=2.6cm, anchor=south west] (Rectangle) at (-0.37,0.2) {};

        \node[] (G) at (1.5,2.75) {$G_i$};

        \draw[-stealth,] (An3) to (G);
        \draw[-stealth,] (G) to (Y);

        \node[] (Ugy) at (2.75,  2) {$\mathbf{U}_{i, GY}$};
        \draw[-stealth, bend right] (Ugy) to (Y);
        \draw[-stealth, bend left] (Ugy) to (Rectangle);

        \node[] (Uag) at (-1.5, 2.75) {$\mathbf{U}_{i,AG}$};
        \draw[-stealth, bend left] (Uag) to (A);
        \draw[-stealth, bend right] (Uag) to (Rectangle);

        \node[] (S) at (-3.5,2.75) {$S_i$};
        \draw[-stealth, bend left] (S) to (A);
        \draw[-stealth, bend right ] (S) to (Rectangle);

        \node[] (Uas) at (-2, 4.75) {$\mathbf{U}_{i,AS}$};
        \draw[-stealth, bend left] (Uas) to (A);
        \draw[-stealth, bend right] (Uas) to (S);

        \node[] (Ugs) at (-2.5016, 0.7068) {$\mathbf{U}_{i,GS}$};
        \draw[-stealth, bend left] (Ugs) to (S);
        \draw[-stealth, bend right] (Ugs) to (Rectangle);
        
        \node[] (Usy) at (-1.5, 6) {$\mathbf{U}_{i,SY}$};
        \draw[-stealth, bend right=35] (Usy) to (S);
        \draw[-stealth, bend left=35] (Usy) to (Y);

        \end{tikzpicture}
        \caption{ A directed acyclic graph consistent with our assumptions. $\mathbf{X}_i$ is omitted for legibility, $\{ A_{i,1},A_{i,2}, ..., A_{i,n_i}  \} = \mathbf{A}_{\mathcal{N}_i}$. We note that this graph is meant for general illustration and there are other systems which satisfy our assumptions (e.g., the six categories of the covariates do not have to be disjoint).}
        \label{fig1}
\end{figure}
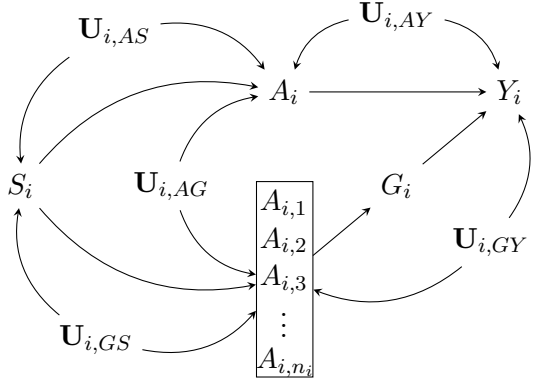

Propositions \ref{unbiasedIPW} and \ref{unbiasedIPW2} highlight that if all of the relevant variables were observed, both the average main treatment effect and the transported average main treatment effect could be identified through re-weighting formulations. However, the presence of spillover effects introduces the need to incorporate covariates inside the propensity score and sampling weight models that would typically be omitted without interference. This is particularly problematic, since many of these omitted variables might not just be individual-level covariates (e.g., a given school
district’s spending per student), but also neighborhood-level covariates (e.g., the average spending per
student of neighboring school districts), which are likely to not even be measured by a naive practitioner.

Propositions \ref{unbiasedIPW} and \ref{unbiasedIPW2} motivate our bias decompositions, as we will compare the expectation of a naive estimator (which ignores interference, unmeasured confounding, and lack of transportability) with the identifications formulas presented above. While we introduce $\phi_{1}$ and $\phi_{2}$ together in this section, to provide sufficient detail for both bias decompositions, we investigate them separately in Sections \ref{sensitivity} and \ref{sectiontransportability} respectively.

\section{Sensitivity Analysis for Unmeasured Confounding and Unobserved Interference}\label{sensitivity}
As discussed in Section \ref{secintroduction}, we examine two limitations to conducting causal inference in the presence of interference: failing to adjust for interference and failing to account for unobserved confounders. Both of these issues can introduce systematic bias into estimation; therefore, we examine both in tandem. To formalize, we consider the case where a practitioner tries to estimate the following functional, which, in the absence of interference and unmeasured confounding, equals $\phi_{1}$:
\begin{equation*} 
\psi = \mathbb{E} \Bigl[\frac{\mathbb{I}(A = 1) \cdot Y}{p(A = 1|S=1,  \mathbf{X}_{AY})}  -\frac{\mathbb{I}(A = 0) \cdot Y}{p(A = 0|S=1, \mathbf{X}_{AY})}  \Bigl{|}S=1\Bigl].
\end{equation*}
A typical estimator of $\psi$ is the IPW estimator \citep{Hirano2001},  given by 
\begin{equation*} 
\hat{\psi} = \mathbb{E}_n \Bigl[\frac{\mathbb{I}(A = 1, S=1) \cdot Y}{\hat{p}(A=1|S=1,  \mathbf{X}_{AY}) \cdot \hat{p}(S=1)} - \frac{\mathbb{I}(A = 0, S=1) \cdot Y}{\big(1 - \hat{p}(A=1|S=1,  \mathbf{X}_{AY}) \big) \cdot \hat{p}(S=1)}\Bigl], 
\end{equation*}
where $\mathbb{E}_n$ is the empirical expectation operator, $\hat{p}(S=1)$ is an estimate of $p(S=1)$, and $\hat{p}(A=1| \mathbf{X}_{AY}, S=1)$ is an estimator of $p(A=1| \mathbf{X}_{AY}, S=1)$. Throughout the rest of this section, we assume that the nuisance function estimator and $\hat{p}(S=1)$ are correctly specified and that the estimation error of $\hat{\psi}$ for $\psi$ is negligible. Therefore, we treat $\hat{\psi}$ as an unbiased estimator of $\psi$. Our goal is to study the bias that arises from using $\hat{\psi}$ instead of an unbiased estimator of $\phi_{1}$. Specifically, we are interested in the following quantity: 
\begin{align*}
\textup{\text{Bias}}_{\phi_{1}}(\hat{\psi}) &= \psi - \phi_{1}.
\end{align*}
As can be seen from the differences between the formula of $\psi$ and the IPW identification formula for $\phi_{1}$, the issue of ignored interference can be formulated as the more widely familiar problem of omitting certain causes of $A$ from the propensity score. Namely, in the presence of interference, in addition to failing to include unobserved confounders for the relationship between $A$ and $Y$ ($\mathbf{U}_{AY}$) inside the propensity score model, the naive practitioner also does not adjust for the \textit{common causes} of $A$ and $A_{\mathcal{N}}$ ($\mathbf{X}_{AG}$ and $\mathbf{U}_{AG}$) or the common causes of $A$ and $S$ ($\mathbf{X}_{AS}$ and $\mathbf{U}_{AS}$), resulting in a violation of Assumption \ref{exchangeability}. Thus, our sensitivity analysis enables practitioners to study the bias resulting from ignored interference solely by investigating the impacts of omitting the aforementioned covariates from the propensity score. 

\subsection{Bias Decomposition and Sensitivity Parameters}\label{biasdecomposition}
To conduct our sensitivity analysis, we begin by comparing $\psi$ with the unbiased IPW form of $\phi_{1}$ (as presented in Proposition \ref{unbiasedIPW}) through a multiplicative error in weights, given by
\begin{equation*}
\varepsilon_a = \frac{p(A=a|S=1,\tilde{\mathbf{X}}, \tilde{\mathbf{U}})}{p(A=a|S=1,\mathbf{X}_{AY})}.
\end{equation*}
We refer to $p(A=a|S=1, \tilde{\mathbf{X}}, \tilde{\mathbf{U}})$ as the \textit{true propensity score} for $A=a$ and to $p(A=a|S=1, \mathbf{X}_{AY})$ as the \textit{pseudo-propensity score} for $A=a$. We call $\varepsilon_a$ the multiplicative error in weights (MEW) score for $A=a$. $\varepsilon_a$ forms the basis of our sensitivity analysis, as $\textup{\text{Bias}}_{\phi_{1}}(\hat{\psi})$ depends on $\varepsilon_a$ and the potential outcomes. We formalize this in Theorem \ref{thmbiasdecomposition1}.
\begin{theorem}[Bias Decomposition for $\phi_{1}$]\label{thmbiasdecomposition1}
Under Assumptions \ref{neighint} through \ref{exchangeability}, the bias of the naive estimator, $\hat{\psi}$, for the natural average treatment effect, $\phi_{1}$, can be written as
\begin{align*}
\textup{\text{Bias}}_{\phi_{1}}(\hat{\psi}) &= \sum_{a = 0}^{1} (-1)^{1-a}\bigg( \rho_{Y^{(a,0)}, \varepsilon_a} \cdot \sigma_{Y^{(a,0)}} \cdot \sigma_{\varepsilon_a}\bigg) \tag{$T_{1}$} \\
&+ \sum_{a = 0}^{1} (-1)^{1-a} \bigg( \rho_{\gamma(a), \varepsilon_a} \cdot \sigma_{\gamma(a)} \cdot \sigma_{\varepsilon_a}  \bigg), \tag{$T_{2}$}
\end{align*}
where $\sigma_X$ represents the standard deviation of arbitrary variable $X$ conditional on $S=1$, and $\rho_{W, Z}$ represents the correlation between arbitrary variables $W$ and $Z$ conditional on $S=1$.
\end{theorem}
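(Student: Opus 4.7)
The plan is to first rewrite the naive estimand $\psi$ as a weighted expectation of the potential outcomes under the true propensity model, with $\varepsilon_a$ absorbing the discrepancy between the pseudo-propensity and the true propensity, and then to convert the resulting bias into a sum of conditional covariances that decompose cleanly using the identity $Y^{(a)} = Y^{(a,0)} + \gamma(a)$.

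Concretely, I would begin by substituting $p(A=a\mid S=1,\mathbf{X}_{AY})^{-1} = \varepsilon_a / p(A=a\mid S=1,\tilde{\mathbf{X}},\tilde{\mathbf{U}})$ into the definition of $\psi$, then use consistency (Assumption \ref{consistency}) to replace $\mathbb{I}(A=a)\,Y$ by $\mathbb{I}(A=a)\,Y^{(a)}$. Because $\varepsilon_a$ is measurable with respect to $(\tilde{\mathbf{X}},\tilde{\mathbf{U}})$, conditioning on $(S,\tilde{\mathbf{X}},\tilde{\mathbf{U}})$ and applying weak conditional exchangeability (Assumption \ref{exchangeability}) together with positivity (Assumption \ref{positivity}) — essentially the calculation that underlies Proposition \ref{unbiasedIPW} — yields
\begin{equation*}
\mathbb{E}\Bigl[\frac{\mathbb{I}(A=a)\,Y}{p(A=a\mid S=1,\mathbf{X}_{AY})}\Bigm| S=1\Bigr] = \mathbb{E}[\varepsilon_a \cdot Y^{(a)} \mid S=1],
\end{equation*}
so that $\psi = \mathbb{E}[\varepsilon_1 Y^{(1)} - \varepsilon_0 Y^{(0)} \mid S=1]$.

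Next, I would observe that $\varepsilon_a$ is a Radon--Nikodym--type ratio whose conditional mean is one: by the tower property, $\mathbb{E}[\varepsilon_a \mid S=1, \mathbf{X}_{AY}] = p(A=a\mid S=1,\mathbf{X}_{AY})^{-1}\cdot \mathbb{E}[p(A=a\mid S=1,\tilde{\mathbf{X}},\tilde{\mathbf{U}})\mid S=1,\mathbf{X}_{AY}] = 1$, hence $\mathbb{E}[\varepsilon_a\mid S=1]=1$. Subtracting $\phi_1 = \mathbb{E}[Y^{(1)}-Y^{(0)}\mid S=1]$ from $\psi$, each of the two summands becomes $\mathbb{E}[\varepsilon_a Y^{(a)}\mid S=1] - \mathbb{E}[\varepsilon_a\mid S=1]\,\mathbb{E}[Y^{(a)}\mid S=1] = \mathrm{Cov}(\varepsilon_a, Y^{(a)}\mid S=1)$, giving the compact form $\textup{Bias}_{\phi_1}(\hat\psi) = \sum_{a}(-1)^{1-a}\mathrm{Cov}(\varepsilon_a, Y^{(a)}\mid S=1)$.

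Finally, I would use the definition of $\gamma(a)$, which together with Assumption \ref{consistency} gives $Y^{(a)} = Y^{(a,G)} = Y^{(a,0)} + \gamma(a)$. Bilinearity of covariance then splits each term into the $Y^{(a,0)}$ piece contributing $T_1$ and the $\gamma(a)$ piece contributing $T_2$, and rewriting covariances in the standard $\rho\cdot\sigma\cdot\sigma$ form yields the theorem. The only genuinely subtle step in the argument is the first one — justifying that the naive IPW weight can be replaced by an integrand involving $Y^{(a)}$ — which is where the measurability of $\varepsilon_a$ with respect to $(\tilde{\mathbf{X}},\tilde{\mathbf{U}})$ and the full strength of Assumption \ref{exchangeability} are both needed. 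Everything afterwards is linear algebra of covariances.
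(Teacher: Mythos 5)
Your proposal is correct and follows essentially the same route as the paper's proof: rewrite each naive IPW term via iterated expectations and Assumption \ref{exchangeability} as $\mathbb{E}[\varepsilon_a Y^{(a)}\mid S=1]$, use $\mathbb{E}[\varepsilon_a\mid S=1]=1$ to express the bias as $\sum_a(-1)^{1-a}\mathrm{Cov}(\varepsilon_a,Y^{(a)}\mid S=1)$, and split $Y^{(a)}=Y^{(a,0)}+\gamma(a)$ to obtain $T_1$ and $T_2$. The only cosmetic differences are that the paper conditions additionally on $Y^{(a)}$ in the inner expectation while you condition only on $(S,\tilde{\mathbf{X}},\tilde{\mathbf{U}})$, and that you explicitly verify $\mathbb{E}[\varepsilon_a\mid S=1]=1$, which the paper merely states.
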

While the details of the proof are left for Appendix \ref{appendixproofs}, we provide some intuition behind Theorem \ref{thmbiasdecomposition1} below.

$T_1 \text{ and } T_2$ each highlight a different aspect of the bias that is introduced by ignored interference and unmeasured confounding. Namely, the bias of the naive estimator for $\phi_{1}$ is governed by a sum of covariances that reflect the relationship between MEW scores and two types of causal variables: the baseline potential outcomes and the natural spillover effects. We write each of these covariances as a product of a correlation term and two standard deviation terms, which form the basis of our sensitivity analysis framework. 

$T_1$ reflects the joint variability of MEW scores and the baseline potential outcomes, $Y^{(1,0)}$ and $Y^{(0,0)}$. We refer to $Y^{(1,0)}$ and $Y^{(0,0)}$ as the baseline potential outcomes since we view $G=0$ as the reference neighborhood treatment. Three terms contribute to $T_1$: the correlation between the baseline potential outcomes and MEW scores ($\rho_{Y^{(a,0)}, \varepsilon_a}$), the standard deviation of the baseline potential outcomes ($\sigma_{Y^{(a,0)}}$), and the standard deviation of MEW scores ($\sigma_{\varepsilon_a}$). If the variability of the baseline potential outcomes is high ($\sigma_{Y^{(a,0)}}$ is large), if the variance of the MEW scores is significant ($\sigma_{\varepsilon_a}$ is large), and if MEW scores and potential outcomes are highly correlated ($|\rho_{Y^{(a,0)}, \varepsilon_a}|$ is close to 1), $T_1$ would contribute significantly to the bias. 

$T_2$ is a measure of the joint variability of individual natural spillover effects ($\gamma(a)$) and MEW scores. $T_2$ contributes significantly to the bias if the following three conditions are met simultaneously: if there is a large amount of heterogeneity in the natural individual spillover effects ($\sigma_{\gamma(a)}^2$ is large), if there is significant variability in MEW scores ($\sigma_{\varepsilon_a}^{2}$ is large), and if there is a strong correlation between MEW scores and spillover effects ($|\rho_{\gamma(a), \varepsilon_a}|$ is close to 1).

\begin{remark} The presence of $T_2$ highlights the fact that under interference, even in the absence of common causes of $A$ and $A_{\mathcal{N}}$ and common causes of $A$ and $S$, the effects of unmeasured confounding on the bias still depend on the distribution of the individual natural spillover effects ($\gamma(a)$). Intuitively, $T_2$ adjusts for the fact that $Y^{(a)}$ is no longer equal to $Y^{(a,0)}$, but is now a function of $G$ and several distinct variables, $Y^{(a,g)}$ for $a \in \{0,1\}$ and $g \in \{0, 1, ..., g_{max}\}$. We note that in the absence of interference, our bias decomposition reduces to the one proposed by \cite{Shen2011}:
\begin{align*}
\textup{\text{Bias}}_{\phi_{1}}(\hat{\psi}) &= \sum_{a = 0}^{1} (-1)^{1-a}\bigg( \rho_{Y^{(a,0)}, \varepsilon^{'}_a} \cdot \sigma_{Y^{(a,0)}} \cdot \sigma_{ \varepsilon^{'}_a}\bigg).
\end{align*}
where $\varepsilon^{'}_a = \frac{p(A=a|\mathbf{X}_{AY}, \mathbf{U}_{AY}, S=1)}{p(A=a|S=1, \mathbf{X}_{AY})}$. 
\end{remark}

\subsection{Specifying the Sensitivity Parameters}\label{settingparams}
Next, we provide several guidelines for specifying the parameters of our bias decomposition. We note that these guidelines are meant to be widely applicable, and that in many cases, practitioners may have certain domain knowledge that allows them to specify these parameters in a less conservative manner.

\paragraph{Specifying Parameters in $T_1$} As discussed previously, the first term reflects the relationship between MEW scores and the baseline potential outcomes. Six sensitivity parameters need to be specified in $T_1$. 

$\sigma_{Y^{(1,0)}}$ and $\sigma_{Y^{(0,0)}}$  act as scaling factors in $T_1$. Unlike variables involving MEW scores,  the variability of the baseline potential outcomes depends only on the outcome variable of interest, and is therefore outside of the practitioner's control. Thus, we propose specifying a single upper bound for both values of $\sigma_{Y^{(a,0)}}$. For example, if the baseline potential outcomes are bounded between $y_{\text{min}}^{\text{ref}}$ and $y_{\text{max}}^{\text{ref}}$, by Popoviciu's inequality on variances, 
\begin{align*}
\max_{a \in \{0,1\}} \sigma_{Y^{(a,0)}} \leq \frac{1}{2} \cdot (y_{\text{max}}^{\text{ref}} - y_{\text{min}}^{\text{ref}}).
\end{align*}

 $\sigma_{\varepsilon_a}$ is a measure of variability in MEW scores. The mean of $\varepsilon_a$ is 1. If $\varepsilon_a$ deviates significantly from this value, that would imply that the pseudo-propensity score for $A=a$ significantly overestimates (if $\varepsilon_a < 1$) or underestimates (if $\varepsilon_a > 1$) the true propensity score (i.e., the effects of ignoring common causes and confounding are fairly strong). Following an argument analogous to the one presented by \cite{Shen2011}, an upper bound can be calculated for this quantity from the observed data, since for $a \in \{0,1\}$, $\sigma_{\varepsilon_a} \leq \sqrt{\mathbb{E} \big[ \frac{1-p(A=a|S=1, \mathbf{X}_{AY})}{p(A=a|S=1, \mathbf{X}_{AY})} | S=1 \big]}$. Unless additional information is available to the practitioner, we propose specifying an upper bound for both values of $\sigma_{\varepsilon_a}$ as some proportion of the maximum standard deviation of MEW scores through a single sensitivity parameter $\eta_{\varepsilon} \in [0,1]$:
 \begin{align*}
 \max_{a \in \{0,1\}}\sigma_{\varepsilon_a} \leq \eta_{\varepsilon} \cdot \sqrt{\mathbb{E} \bigg[ \frac{1-p(A=a|S=1, \mathbf{X}_{AY})}{p(A=a|S=1, \mathbf{X}_{AY})} \bigg| S=1 \bigg]}. 
\end{align*}
A formal benchmarking procedure could be employed to attain a reasonable range of values for $\eta_{\varepsilon}$.
\begin{remark}
    The choice of $\eta_{\varepsilon}$ can be viewed from the perspective of previous sensitivity analyses by \cite{Rosenbaum2002} and \cite{Tan2006}. Namely, instead of specifying $\eta_{\varepsilon}$, one could assume that $\varepsilon_a$ is bounded between $\alpha_a^{-1}$ and $\alpha_a$, where $\alpha_a \geq 1$ is a sensitivity parameter. In that case, by the Bhatia-Davis inequality, $\sigma_{\varepsilon_a} \leq \sqrt{\alpha_a - 2 + \alpha_a^{-1}}$. A choice of $\alpha_a$ which provides a non-trivial bound on $\sigma_{\varepsilon_a}$ in light of the distribution of the pseudo-propensity scores can be viewed as corresponding to $\eta_{\varepsilon} < 1$. A value of 1 for $\alpha_a$ and 0 for $\eta_{\varepsilon}$ corresponds to the setting in which the pseudo-propensity score is identical to the true propensity score (i.e., no confounding or interference).
\end{remark}
The two $\rho_{Y^{(a,0)}, \varepsilon_a}$ values describe the correlation between the baseline potential outcomes and MEW scores, and determine the sign of $T_1$. If these correlations are zero, then $T_1$ does not contribute to the bias, despite the presence of unmeasured confounding and interference. Generally, a positive $\rho_{Y^{(a,0)}, \varepsilon_a}$ implies that units with higher values of $Y^{(a,0)}$ tend to have pseudo-propensity scores that are lower than their true propensity scores; meanwhile, a negative  $\rho_{Y^{(a,0)}, \varepsilon_a}$ indicates that the pseudo-propensity scores of units with high values of $Y^{(a,0)}$ tend to overestimate their true propensity scores. If a practitioner does not have domain knowledge which allows them to specify the signs of the correlations, they can first set these signs so as to maximize bias, and then vary their magnitude from 0 to 1 by setting $\rho_{\text{baseline}} = |\rho_{Y^{(0,0)}, \varepsilon_0}| = |\rho_{Y^{(1,0)}, \varepsilon_1}|$.

\paragraph{Specifying Parameters in $T_2$} $T_2$ describes the relationship between the individual natural spillover effects and MEW scores. To determine the magnitude and sign of $T_2$, four additional terms (beyond those present in $T_1$) need to be specified.

$\sigma_{\gamma(a)}$ measures the variability of natural spillover effects if $G$ is allowed to vary as it does in the reference population. This term is large if there is a significant number of units which receive neighborhood treatments other than 0 and if there is notable variability in causal contrasts of the form $Y^{(a,g)} - Y^{(a,0)}$. If the natural spillover effects ($\gamma(a)$) are both bounded between $x_{\text{min}}^{\text{ref}}$ and $x_{\text{max}}^{\text{ref}}$ for $a \in \{0,1\}$, the worst-case bound for this term is $ \frac{1}{2} \cdot (x_{\text{max}}^{\text{ref}} - x_{\text{min}}^{\text{ref}})$. While this is another quantity the practitioner has no control over, in many settings where interference is ignored, researchers may be interested in estimating the bias of their naive estimator under a variety of assumptions about the strength of the ignored interference. Therefore, we suggest specifying an upper bound for both values of $\sigma_{\gamma(a)}$ as a single proportion of the maximum standard deviation of $\gamma(a)$ through the sensitivity parameter $\eta_{\gamma} \in [0,1]$, which reflects the level of heterogeneity in the individual spillover effects:
\begin{align*}
\max_{a \in \{0,1\}} \sigma_{\gamma(a)} \leq \frac{\eta_{\gamma}}{2} \cdot (x_{\text{max}}^{\text{ref}} - x_{\text{min}}^{\text{ref}}).  
\end{align*}

 $\rho_{\gamma(a), \varepsilon_a}$ determines the sign of $T_2$. Generally, if the correlation is positive, then the pseudo-propensity scores of units with higher values of $\gamma(a)$ usually underestimate their true propensity scores, and if the correlation is negative, units with higher values of $\gamma(a)$ have pseudo-propensity scores that typically overestimate their true propensity scores. If domain knowledge does not allow a practitioner to specify these correlations, we propose setting the signs of $\rho_{\gamma(a), \varepsilon_a}$ so as to maximize bias, and adjusting their common magnitude $\rho_{\text{spillover}} = |\rho_{\gamma(0), \varepsilon_0}| = |\rho_{\gamma(1), \varepsilon_1}|$ from 0 to 1.

\section{Sensitivity Analysis for Transportability with Interference}\label{sectiontransportability}
In this section, we consider settings in which the practitioner is interested in estimating the transported natural average treatment effect ($\phi_{2}$) alongside the natural average treatment effect ($\phi_{1}$). Because treatment assignment mechanisms and covariate distributions may differ between the target and reference populations, in general, $\phi_{1} \neq \phi_{2}$. If the practitioner wrongly assumes that their naive estimate generalizes to the target population, they will incur bias due to lack of transportability, in addition to any existing bias arising from ignoring interference and unmeasured confounding \citep{Degtiar2023}. Proposition \ref{unbiasedIPW2} highlights that bias due to lack of transportability occurs since the practitioner's naive estimator, $\hat{\psi}$, only incorporates information from the reference population ($S=1$), without reweighting it to the target population ($S=2$). 

To estimate the bias arising from lack of transportability, we introduce a new term: a difference error in weights (DEW) score, denoted by $\tilde{\varepsilon}_S$, and given by
\begin{equation*}
    \tilde{\varepsilon}_S = 1 - \frac{p(S=2|G, \mathbf{\breve{X}}, \mathbf{\breve{U}}) }{p(S=1|G, \mathbf{\breve{X}}, \mathbf{\breve{U}}) } \cdot \frac{p(S=1)}{p(S=2)}.
\end{equation*}
While the MEW score, $\varepsilon_a$, compares the naive propensity score with the true propensity score on the multiplicative scale, $\tilde{\varepsilon}_S$ compares the naive sampling weight of 1 to the true sampling weight on the difference scale. We note that if there is no transportability bias, $\tilde{\varepsilon}_S = 0$. To write the bias of a naive estimator for $\phi_2$ as a function of DEW scores, we decompose it into the sum of the bias of $\hat{\psi}$ for $\phi_1$ and the difference between $\phi_1$ and $\phi_2$:
\begin{align*}
\textup{\text{Bias}}_{\phi_{2}}(\hat{\psi}) &= \psi - \phi_{2} = \big(\psi - \phi_{1} \big) + \big(\phi_{1} -\phi_{2} \big).
\end{align*}
The second equality forms the basis of our bias decomposition, which is formalized below.

\begin{theorem}[Bias Decomposition for $\phi_{2}$]\label{thmbiasdecomposition2}
Under Assumptions \ref{neighint} through \ref{assumptionoutcomegeneratingmechanism}, the bias of the naive estimator, $\hat{\psi}$, for the transported natural average treatment effect, $\phi_{2}$, can be written as
\begin{align*}
\textup{\text{Bias}}_{\phi_{2}}(\hat{\psi}) &= \textup{\text{Bias}}_{\phi_{1}}(\hat{\psi}) \tag{$T_{1} + T_{2}$} \\
&+ \rho_{\tau(0), \tilde{\varepsilon}_S} \cdot \sigma_{\tau(0)} \cdot \sigma_{\tilde{\varepsilon}_S}  \tag{$T_{3}$} \\
&+ \rho_{\gamma(1)-\gamma(0), \tilde{\varepsilon}_S} \cdot \sigma_{\gamma(1)-\gamma(0)} \cdot \sigma_{\tilde{\varepsilon}_S}  \tag{$T_{4}$}
\end{align*}
where  $T_1$, $T_2$, $\sigma_X$, and $\rho_{W,Z}$ are defined as in Theorem \ref{thmbiasdecomposition1}.
\end{theorem}
Theorem \ref{thmbiasdecomposition2} underlines that the difference between $\phi_{1}$ and $\phi_{2}$ is governed by two terms: $T_3$, which captures the relationship between DEW scores and direct effects (namely, controlled individual direct effects at $G=0$), and $T_4$, which captures the relationship between DEW scores and indirect effects (in particular, the difference in the natural individual spillover effects). 

$T_3$ represents the joint variability of $\tau(0)$ and $\tilde{\varepsilon}_S$, and is a product of three sensitivity parameters. The first is the correlation between the controlled individual main effects at $G=0$ and DEW scores ($\rho_{\tau(0), \tilde{\varepsilon}_s}$). If the correlation is positive, units with higher values of $\tau(0)$ tend to have sampling weights less than 1; if the correlation is negative, units with higher values of $\tau(0)$ generally have sampling weights greater than 1. The second sensitivity parameter is $\sigma_{\tau(0)}$, which is a scaling factor that reflects the impact of personal treatment effect heterogeneity on the bias. The third parameter is $\sigma_{\tilde{\varepsilon}_S}$, which measures the variability in DEW scores around their mean value of 0, acting as a gauge for the impact of the sampling weights.

$T_4$ is the covariance between DEW scores and the difference in the natural spillover effects at $A=1$ and $A=0$. The magnitude of $T_4$ depends on the heterogeneity in the difference of the individual natural spillover effects ($\sigma_{\gamma(1)-\gamma(0)}$), the variability of DEW scores ($\sigma_{\tilde{\varepsilon}_S}$), and the correlation between $\gamma(1)-\gamma(0)$ and DEW scores. This last term also determines the sign of $T_4$: if $\rho_{\gamma(1)-\gamma(0), \tilde{\varepsilon}_S} > 0$, units with larger spillover effects at $A=1$ when compared with $A=0$ typically have sampling weights less than 1, while if $\rho_{\gamma(1)-\gamma(0), \tilde{\varepsilon}_S} < 0$, they tend to have sampling weights greater than 1. As is the case with $T_3$, all three of the sensitivity parameters comprising $T_4$ have to be significant for the term to contribute meaningfully to the bias.

\begin{remark}
    We note that $\textup{\text{Bias}}_{\phi_{2}}(\hat{\psi})$ can also be written as 
    \begin{align*}
\textup{\text{Bias}}_{\phi_{2}}(\hat{\psi}) &= \sum_{a = 0}^{1} (-1)^{1-a}\bigg( \rho_{Y^{(a,0)}, \varepsilon_a + \tilde{\varepsilon}_S} \cdot \sigma_{Y^{(a,0)}} \cdot \sigma_{\varepsilon_a + \tilde{\varepsilon}_S}\bigg) \\
&+ \sum_{a = 0}^{1} (-1)^{1-a} \bigg( \rho_{\gamma(a), \varepsilon_a + \tilde{\varepsilon}_S} \cdot \sigma_{\gamma(a)} \cdot \sigma_{\varepsilon_a + \tilde{\varepsilon}_S}  \bigg).
\end{align*}
Thus, the bias of $\hat{\psi}$ for $\phi_2$ can be represented as the joint variability of the combined error term (the sum of MEW and DEW scores) with the baseline potential outcomes and individual natural spillover effects.
\end{remark}

The bias decomposition outlined above is widely applicable. Nevertheless, Theorem \ref{thmbiasdecomposition2} can also be significantly simplified in many settings. We outline several such cases below.

\begin{corollary}
    If $S$ and $Y$ as well as $G$ and $Y$ have no common causes, $\tilde{\varepsilon}_S $ will reduce to
    \begin{align*}
\tilde{\varepsilon}_S = 1-\frac{p(S=2|G)}{p(S=1|G)} \cdot \frac{p(S=1)}{p(S=2)} = 1 - \frac{p(G|S=2)}{p(G|S=1)}.
\end{align*}
Therefore, the transportability bias would be equal to 0 if the marginal distribution of $G$ was the same between the two populations. 
\end{corollary}

\begin{corollary} \label{corollarynointerference}
    In the absence of interference, the bias caused by lack of transportability reduces to $T_3$, and $\tilde{\varepsilon}_S $ becomes 
            \begin{align*}
            \tilde{\varepsilon}_S = 1 - \frac{p(S=2|\mathbf{X}_{SY}, \mathbf{U}_{SY}) }{p(S=1|\mathbf{X}_{SY}, \mathbf{U}_{SY}) } \cdot \frac{p(S=1)}{p(S=2)}.
    \end{align*}
\end{corollary}
Corollary \ref{corollarynointerference} highlights that our method generalizes the sensitivity analysis framework presented by \cite{Huang2024}.

\begin{proposition}\label{propositionoutcomegeneratingequation}
    If Assumptions \ref{neighint} through \ref{assumptionoutcomegeneratingmechanism} are met, $Y$ and $S$ have no common causes, $S$ only has direct causal effects on $A$ and $G$, and $Y$ is generated according to the model
    \begin{equation*}
        Y = f_0(\mathbf{X}, \mathbf{U}) + f_1(A, \mathbf{X}, \mathbf{U}) + f_2(G, \mathbf{X}, \mathbf{U}) + \epsilon(A,G),
    \end{equation*}
    where $\epsilon(A,G)$ is a zero-mean noise term uncorrelated with any of the covariates, then $\phi_1 = \phi_2$.
\end{proposition}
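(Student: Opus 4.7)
The goal is to show $T_3 = 0$ under the assumed structural model, since Theorem \ref{thmbiasdecomposition2} already decomposes $\textup{Bias}_{\phi_{2}}(\hat{\psi}) = \textup{Bias}_{\phi_{1}}(\hat{\psi}) + T_3$. The plan is to exploit additivity to simplify $\tau(g)$ and $\mathbb{E}[\tau(g)]$, and then show that both summands inside $T_3$ telescope to zero because the (conditional) laws of $G$ must sum to one over $g$.

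Concretely, substituting $A=a$ and $G=g$ into the given structural equation gives $Y^{(a,g)} = f_0(\mathbf{X},\mathbf{U}) + f_1(a,\mathbf{X},\mathbf{U}) + f_2(g,\mathbf{X},\mathbf{U}) + \epsilon(a,g)$. The $f_0$ and $f_2(g,\cdot)$ pieces cancel in the contrast, yielding $\tau(g) = \mu(\mathbf{X},\mathbf{U}) + \delta(g)$ with $\mu := f_1(1,\mathbf{X},\mathbf{U}) - f_1(0,\mathbf{X},\mathbf{U})$ independent of $g$, and $\delta(g) := \epsilon(1,g) - \epsilon(0,g)$ a mean-zero noise uncorrelated with the confounders. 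Two consequences fall out immediately: $\mathbb{E}[\tau(g)] = \mathbb{E}[\mu]$ does not depend on $g$, and $\textup{Cov}(\tau(g), \upsilon(g)) = \textup{Cov}(\mu,\upsilon(g))$, because $\upsilon(g)$ is a function of $(\mathbf{X}_{GY}, \mathbf{U}_{GY}, S)$ and so is uncorrelated with $\delta(g)$.

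Feeding these into the $T_3$ formula, the second summand $\sum_{g} \mathbb{E}[\mu]\bigl(p(G=g|S=1) - p(G=g|S=2)\bigr)$ telescopes to $\mathbb{E}[\mu](1 - 1) = 0$. For the covariance summand I would use bilinearity together with the pointwise identity $\sum_{g} \upsilon(g) = 0$, which holds almost surely since both conditional distributions of $G$ given $(\mathbf{X}_{GY}, \mathbf{U}_{GY})$ integrate to one; this yields $\sum_{g} \textup{Cov}(\mu, \upsilon(g)) = \textup{Cov}(\mu, 0) = 0$. Adding the two zero contributions gives $T_3 = 0$ and hence $\textup{Bias}_{\phi_{2}}(\hat{\psi}) = \textup{Bias}_{\phi_{1}}(\hat{\psi})$.

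The subtle step is interpreting \emph{uncorrelated with any of the confounders} strongly enough to conclude $\textup{Cov}(\delta(g),\upsilon(g)) = 0$: because $\upsilon(g)$ also depends on $S$, one must either tacitly include $S$ among the \emph{confounders} (natural under Assumption \ref{assumptionoutcomegeneratingmechanism}, which forces $Y^{(a,g)} \ind S$) or impose the slightly stronger mean-independence $\mathbb{E}[\epsilon(a,g) \mid \mathbf{X}, \mathbf{U}, S] = 0$. Either convention kills the noise contribution, after which the remainder is purely algebraic, driven entirely by the two probabilities-sum-to-one telescopings.
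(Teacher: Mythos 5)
Your proof is correct, and it reaches the conclusion by a more direct route than the paper. The paper's proof rewrites both summands of $T_3$ by summation by parts over $g$, replacing $p(G=g|\cdot)$ with cumulative probabilities $p(G\leq g|\cdot)$, and then observes that under the structural model each increment $\tau(g-1)-\tau(g)=\bigl(\epsilon(1,g-1)-\epsilon(0,g-1)\bigr)-\bigl(\epsilon(1,g)-\epsilon(0,g)\bigr)$ is pure noise with zero mean and zero covariance with the cumulative-probability terms, so that only the boundary terms $\mathbb{E}\big[\tau(g_{\text{max}})\big]$ survive and cancel between the two populations. You instead exploit the additive separability directly, writing $\tau(g)=\mu(\mathbf{X},\mathbf{U})+\delta(g)$ with $\mu$ free of $g$, so that $\mathbb{E}[\tau(g)]$ is constant in $g$ and the covariance term reduces to $\mathrm{Cov}(\mu,\upsilon(g))$; then both summands vanish by the normalization $\sum_g p(G=g|\cdot)=1$ (equivalently $\sum_g\upsilon(g)=0$ almost surely) together with bilinearity of covariance. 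Your argument is shorter and makes the mechanism more transparent, while the paper's Abel-summation form isolates exactly which feature of the model matters (the increments $\tau(g-1)-\tau(g)$ being uninformative), which is what its subsequent remark reuses to handle the estimand $\theta$ with an extra $f_3(A,G)$ term. Two small points: the strengthening you flag --- reading ``uncorrelated with any of the confounders'' as uncorrelatedness with (or mean independence from) \emph{functions} of the confounders --- is indeed needed, but the paper's own proof relies on it in exactly the same way for the terms $p(G\leq g-1|S=s,\mathbf{X}_{GY},\mathbf{U}_{GY})$, so this is a shared convention rather than a gap on your side; and your worry about $S$ entering $\upsilon(g)$ is unnecessary, since $\upsilon(g)$ is evaluated at the fixed values $S=1$ and $S=2$ and is therefore a function of $(\mathbf{X}_{GY},\mathbf{U}_{GY})$ alone.
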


In order to specify the values of the sensitivity parameters in $T_3$ and $T_4$, we extend our methodology from Section \ref{sensitivity}. Since the personal treatment effect heterogeneity depends only on the outcome variable, assuming that the potential outcomes are bounded between $y_{\text{min}}^{\text{ref}}$ and $y_{\text{max}}^{\text{ref}}$, we propose setting an upper bound for $\sigma_{\tau(0)}$ as
\begin{equation*}
    \sigma_{\tau(0)} \leq (y_{\text{max}}^{\text{ref}} - y_{\text{min}}^{\text{ref}}).
\end{equation*}
On the other hand, since practitioners may wish to evaluate the bias under a variety of interference patterns, the upper bound on the standard deviation of $\gamma(1)-\gamma(0)$ can be specified as a proportion of its maximum value via the parameter $\beta_{\gamma} \in [0,1]$, which reflects the practitioners' beliefs about the variability of the difference in spillover effects:
\begin{equation*}
    \sigma_{\gamma(1)-\gamma(0)} \leq \beta_{\gamma} \cdot  (x_{\text{max}}^{\text{ref}} - x_{\text{min}}^{\text{ref}}).
\end{equation*}

Similarly, the practitioner can specify a value $\alpha_S \geq 1$ such that the sampling weights lie between $\alpha_S^{-1}$ and $\alpha_S$, which results in an upper bound for the value of $\sigma_{\tilde{\varepsilon}_S}$ via the Bhatia-Davis inequality:
\begin{equation*}
    \sigma_{\tilde{\varepsilon}_S} \leq \frac{(\alpha_S-1)}{\sqrt{\alpha_S}}.
\end{equation*}

As before, in the absence of additional information, the practitioner can set the signs of the correlations ($\rho_{\tau(0), \tilde{\varepsilon}_S}$ and $\rho_{\gamma(1)-\gamma(0), \tilde{\varepsilon}_S}$) so as to maximize bias, and vary their magnitude between 0 and 1. The bias resulting from various combinations of the aforementioned sensitivity parameters can be examined via two and three-dimensional contour plots.

\section{Bias Decomposition in the Presence of Undefined Potential Outcomes}\label{undefined}
In earlier sections of this paper, we considered settings in which all potential outcomes are well-defined for all units. While Assumption \ref{assumptionwelldefined}, which encodes this condition, is widely applicable, there may be certain settings in which it is unreasonable. Concerns about violations of Assumption \ref{assumptionwelldefined} are most relevant for applications where the number of neighbors (which determines the domain of the exposure mapping function) is viewed as fixed. For instance, in a geographic context, if $G$ represents the number of a neighboring countries adopting a certain policy, defining the potential outcome $Y^{(1,2)}$ for a country that has only one neighbor could be seen as conceptually incoherent. Moreover, in general, potential outcomes of the form $Y^{(a,g)}$ are undefined for units without neighbors \citep{Forastiere2021, Kim2025}.  Due to these issues, practitioners may wish to define a different number of potential outcomes for each unit to better match their philosophical intuitions about causality. 

In this section, we extend our earlier results to account for this complexity. Without loss of generality, we will assume that $G \in \{0, 1, 2, ..., g_{max}\}$ represents the number of treated neighbors a unit has. For the sake of notational convenience, we also make the following assumptions:
\begin{itemize}
    \item $Y^{(a,g)}$ for $a \in \{0,1\}$ is well-defined for any unit with $g$ or more neighbors, and is undefined for any unit with fewer than $g$ neighbors.
    \item  $N$ records the number of neighbors a unit has and $N \in \{0, 1, 2, ..., g_{max} \}$.
    \item The potential outcomes $Y^{(1,0)}$ and $Y^{(0,0)}$ are well-defined for units with $N = 0$ (i.e., the outcomes of units without neighbors behave like those of units that have neighbors none of whom are treated). 
    \item   $V_g = \mathbb{I}(N \geq g)$ tracks units for which $Y^{(a,g)}$ is well-defined. Note that for any $\tilde{g} < g'$ if  $V_{g'} = 1$, then $V_{\tilde{g}} = 1$.  
\end{itemize}

In order to extend our bias decomposition to this setting, the assumptions and causal estimands outlined in Section \ref{problem} need to be modified. First, we introduce a new version of consistency, which relates the observed outcome to the potential outcomes in a manner which depends on a unit's number of neighbors.
\begin{assumption}[Consistency with Undefined Potential Outcomes] \label{assumptionconditionalconsistency}
Given that $N = n$, the potential outcome $Y^{(a)}$ satisfies $Y^{(a)} = \sum_{g = 0}^{n} \mathbb{I}(G=g) \cdot  Y^{(a,g)}$ and the observed outcome $Y$ satisfies $Y =  \sum_{a = 0}^{1} \mathbb{I}(A=a) \cdot  Y^{(a)}$.
\end{assumption}
Since potential outcomes are only well-defined conditionally under this framework, the causal estimands need to be re-defined. We begin by considering a measure of the causal impact of changing $A$, the \textit{controlled individual main effect}, 
\begin{equation*}
\tau(g) :=  \begin{cases} 
      \text{N.A.   if } V_g = 0 \\
      Y^{(1,g)}-Y^{(0,g)} \text{   if } V_g = 1. \\ 
   \end{cases}
\end{equation*}
Here, $\tau(g)$ still represents the impact of changing $A$ from 0 to 1 while $G$ is fixed. However, this quantity is only well-defined for units who have enough neighbors to receive neighborhood treatment $g$. The other causal estimand measuring the impact of changing $A$, the \textit{natural individual main effect}, $\kappa$, now also depends on the number of neighbors a unit has. Namely, if $N=n$, then $\kappa = Y^{(1)} - Y^{(0)} = \sum_{g = 0}^{n} \mathbb{I}(G=g) \cdot  \tau(g)$. 

Our previous parameters of interest, the \textit{natural average main effect}, $\phi_{1}$, and the \textit{transported natural average main effect}, $\phi_{2}$, can be expressed as conditional expectations of $\kappa$ across the reference and target populations respectively, as
\begin{align*}
\phi_{s} &= \sum_{n=0}^{ g_{\text{max}}}\mathbb{E}\big[\underbrace{Y^{(1)}-Y^{(0)}}_{\kappa} |S=s, N =n \big] \cdot p(N=n|S=s) \\
&= \sum_{n=0}^{ g_{\text{max}}} \sum_{g=0}^{ n}\mathbb{E}\big[\mathbb{I}(G=g) \cdot \tau(g) |S=s, N =n \big] \cdot p(N=n|S=s)\\
&= \sum_{g=0}^{ g_{\text{max}}} \sum_{n=g}^{ g_{\text{max}}} \mathbb{E}\big[\mathbb{I}(G=g) \cdot \tau(g) |S=s, N =n\big] \cdot p(N=n|S=s) \\
&= \sum_{g = 0}^{ g_{\text{max}}} \mathbb{E}\big[\mathbb{I}(G=g) \cdot \tau(g) |S=s,  V_g = 1 \big] \cdot p(V_g=1|S=s).
\end{align*}

Next, we modify our assumptions of positivity, exchangeability, and conditional treatment independence to account for undefined potential outcomes.
\begin{assumption}[Positivity with Undefined Potential Outcomes]\label{assumptionconditionalpositivity}
For any $ a \in \{0, 1\}$ and any $ s \in \{1,2\}$ we have, almost surely in $(N, \mathbf{X}, \mathbf{U})$ and  $(N, G, \mathbf{X}, \mathbf{U})$,
\begin{equation*}
p(A=a|S = s, N, \mathbf{X}, \mathbf{U}) > 0  \text{ and } p(S=s|N, G,\mathbf{X}, \mathbf{U}) > 0.
\end{equation*}

\end{assumption}
\begin{assumption}[Weak Conditional Exchangeability with Undefined Potential Outcomes]\label{assumptionconditionalexchangeability}
Given that $N=n$ where $n \in \{0,1,2,..., g_{\text{max}}\}$, for any $a \in \{0, 1\}$, we have 
\begin{equation*}
Y^{(a)} \ind A|S, N=n, \tilde{\mathbf{X}}, \tilde{\mathbf{U}}.
\end{equation*}
\end{assumption}

Though Assumptions \ref{assumptionconditionalconsistency} through \ref{assumptionconditionalexchangeability} suffice to develop a bias decomposition for $\phi_{1}$, we once again require the conditions of treatment independence and sampling ignorability to develop a bias decomposition for $\phi_{2}$.

\begin{assumption}[Conditional Treatment Independence with Undefined Potential Outcomes]\label{assumptionindependenceofAandGundefinedpotentialoutcomes}
We have
\begin{equation*}
A \ind G |S,N,\tilde{\mathbf{X}}, \tilde{\mathbf{U}}.
\end{equation*}
\end{assumption}

\begin{assumption}[Conditional Ignorability of Sampling with Undefined Potential Outcomes]\label{assumptionsamplingignorabilityundefined} Given that $N=n$, for any $a \in \{0, 1\}$  we have
\begin{equation*}
   Y^{(a)} \ind S |N=n, G, \mathbf{\dot{X}},\mathbf{\dot{U}}.
\end{equation*}
\end{assumption}

Finally, we introduce one additional condition, which allows us to deal with the difficulties introduced by the presence of undefined potential outcomes, and to recover bias decomposition formulas similar to the ones in Sections  \ref{sensitivity} and \ref{sectiontransportability}.

\begin{assumption}[Independence Assumptions for Undefined Potential Outcomes]\label{assumptionindependenceforundefinedoutcomes} We assume the following:
\begin{itemize}
    \item The two nuisance functions for $A$, the true propensity score and the pseudo-propensity score, do not depend on the number of neighbors. In other words, $p(A=a|S=1, N=n, \mathbf{\tilde{X}}, \mathbf{\tilde{U}}) = p(A=a|S=1, \mathbf{\tilde{X}}, \mathbf{\tilde{U}})$ and $p(A=a|S=1, N=n,  \mathbf{X}_{AY}) = p(A=a|S=1,  \mathbf{X}_{AY})$.
    \item The nuisance function for $S$ does not depend on the number of neighbors. In other words, $p(S=s|N=n, G, \mathbf{\breve{X}},  \mathbf{\breve{U}}) = p(S=s|G, \mathbf{\breve{X}},  \mathbf{\breve{U}})$.
    \item The marginal distribution of the number of neighbors is equivalent between the two populations (i.e., $ N \ind S$).
\end{itemize}

\end{assumption}

Using the assumptions outlined above, we can extend our previous formulas providing identification for $\phi_1$ and $\phi_2$ (if all of the relevant variables were observed) to our new setting.

\begin{proposition}[Identification Formulas with Undefined Potential Outcomes]\label{unbiasedIPW1} Under Assumption \ref{neighint}, Assumptions \ref{assumptionconditionalconsistency} through \ref{assumptionconditionalexchangeability}, and Assumption \ref{assumptionindependenceforundefinedoutcomes}, $\phi_{1}$ can be identified by the formula given in Proposition \ref{unbiasedIPW}. Meanwhile, under Assumption \ref{neighint} and Assumptions \ref{assumptionconditionalconsistency} through \ref{assumptionindependenceforundefinedoutcomes}, $\phi_{2}$ can be identified by the formula given in Proposition \ref{unbiasedIPW2}.
\end{proposition}

We note that although our identification formulas do not change, the fact that $Y^{(a)}$ is defined differently depending on a unit's number of neighbors necessitates stricter assumptions, and introduces unique challenges (such as requiring algebraic manipulations of the potential outcomes to be carried out conditionally on $N = n$). As in Sections \ref{sensitivity} and \ref{sectiontransportability}, the practitioner's naive estimate can be compared to the identification formulas provided in Proposition \ref{unbiasedIPW1} through several covariance terms relating MEW and DEW scores with the potential outcomes. Using this insight, we develop a bias decomposition appropriate for settings with undefined potential outcomes below. 

\begin{theorem}[Bias Decomposition in Presence of Undefined Potential Outcomes]\label{thmupdatedbiasdecomposition}
Under Assumption \ref{neighint}, Assumptions \ref{assumptionconditionalconsistency} through \ref{assumptionconditionalexchangeability}, and Assumption \ref{assumptionindependenceforundefinedoutcomes}, the bias of a naive estimator, $\hat{\psi}$, for the natural average treatment effect, $\phi_{1}$, can be written as
\begin{align*}
\text{\textup{Bias}}_{\phi_{1}}(\hat{\psi}) &= \sum_{a = 0}^{1}  (-1)^{1-a} \cdot \text{\textup{Cov}} \Bigl( Y^{(a,0)}, \varepsilon_a |S=1\Bigl) \\
&+ \sum_{a = 0}^{1}\sum_{n = 0}^{ g_{\text{max}}} (-1)^{1-a} \cdot \text{\textup{Cov}} \Bigl(\gamma(a) , \varepsilon_a |S=1, N=n \Bigl) \cdot p(N=n|S=1),
\end{align*}
where $\varepsilon_a$ is defined as in Theorem \ref{thmbiasdecomposition1}. Meanwhile, under Assumption \ref{neighint} and Assumptions \ref{assumptionconditionalconsistency} through \ref{assumptionindependenceforundefinedoutcomes}, the bias of $\hat{\psi}$ for the transported natural main effect, $\phi_{2}$, can be written as
\begin{align*}
\textup{\text{Bias}}_{\phi_{2}}(\hat{\psi}) &= \textup{\text{Bias}}_{\phi_{1}}(\hat{\psi}) \\
&+\textup{Cov}\big(\tau(0), \tilde{\varepsilon}_S \big| S=1\big)  \\
&+ \sum_{n = 0}^{ g_{\text{max}}} \textup{Cov}\big(\gamma(1)-\gamma(0), \tilde{\varepsilon}_S\big| S=1, N=n \big) \cdot p(N=n|S=1),
\end{align*}
where $\tilde{\varepsilon}_S$ is defined as in Theorem \ref{thmbiasdecomposition2}.
\end{theorem}
Theorem \ref{thmupdatedbiasdecomposition} can serve as the basis of a sensitivity analysis framework developed according to the principles discussed in Section \ref{sensitivity} and Section \ref{sectiontransportability}. Note that if the potential outcomes $Y^{(a)}$ for $a \in \{0,1\}$ are well-defined for all units (without conditioning on $N=n$), the bias decompositions in Theorem \ref{thmupdatedbiasdecomposition} reduce to the forms provided in Theorems \ref{thmbiasdecomposition1}  and \ref{thmbiasdecomposition2}. Thus, many of the results from  previous sections of this paper can be extended to applications with undefined potential outcomes.

\section{Conclusion}\label{sectionconclusion}
We introduced a sensitivity analysis framework, which, unlike existing frameworks, can be used to explore the effects of unmeasured confounding, omitted interference, and lack of transportability simultaneously. Our method does not require strict parametric assumptions about the data generating mechanism and provides practitioners with the ability to specify bias through several easily interpretable sensitivity parameters. These parameters are flexible and can integrate a wide array of perspectives informed by domain knowledge. We also investigated several special cases under which our bias decomposition can be significantly simplified. Finally, we extended our methodology to account for the additional data complexity of undefined potential outcomes, which can arise in the presence of interference.

\bibliography{bibliography}

@article{Vanderweele2015,
  title={Interference and sensitivity analysis},
  author={VanderWeele, Tyler J and Tchetgen Tchetgen, Eric J  and Halloran, M Elizabeth},
  journal={Statistical science: a review journal of the Institute of Mathematical Statistics},
  volume={29},
  number={4},
  pages={687},
  year={2015}
}

@article{Forastiere2024,
  title={Causal inference on networks under continuous treatment interference},
  author={Forastiere, Laura and Del Prete, Davide and Sciabolazza, Valerio Leone},
  journal={Social Networks},
  volume={76},
  pages={88--111},
  year={2024},
  publisher={Elsevier}
}

@article{Papadogeorgou2023,
  title={Spatial causal inference in the presence of unmeasured confounding and interference},
  author={Papadogeorgou, Georgia and Samanta, Srijata},
  journal={arXiv preprint arXiv:2303.08218},
  year={2023}
}

@article{Sobel2006,
  title={What do randomized studies of housing mobility demonstrate? Causal inference in the face of interference},
  author={Sobel, Michael E},
  journal={Journal of the American Statistical Association},
  volume={101},
  number={476},
  pages={1398--1407},
  year={2006},
  publisher={Taylor \& Francis}
}

@article{Chen2025,
  title={Causal Effect Estimation under Networked Interference without Networked Unconfoundedness Assumption},
  author={Chen, Weilin and Cai, Ruichu and Qiao, Jie and Yan, Yuguang and Hern{\'a}ndez-Lobato, Jos{\'e} Miguel},
  journal={arXiv preprint arXiv:2502.19741},
  year={2025}
}

@article{Hong2021,
  title={Did you conduct a sensitivity analysis? A new weighting-based approach for evaluations of the average treatment effect for the treated},
  author={Hong, Guanglei and Yang, Fan and Qin, Xu},
  journal={Journal of the Royal Statistical Society Series A: Statistics in Society},
  volume={184},
  number={1},
  pages={227--254},
  year={2021},
  publisher={Oxford University Press}
}

@article{Aronow2017,
author = {Peter M. Aronow and Cyrus Samii},
title = {{Estimating average causal effects under general interference, with application to a social network experiment}},
volume = {11},
journal = {The Annals of Applied Statistics},
number = {4},
publisher = {Institute of Mathematical Statistics},
pages = {1912 -- 1947},
keywords = {Causal inference, interference, networks, potential outcomes, Randomization inference, SUTVA},
year = {2017},
doi = {10.1214/16-AOAS1005},
URL = {https://doi.org/10.1214/16-AOAS1005}
}

@article{Lee2023,
  title={Estimating causal effects of HIV prevention interventions with interference in network-based studies among people who inject drugs},
  author={Lee, TingFang and Buchanan, Ashley L and Katenka, Natallia V and Forastiere, Laura and Halloran, M Elizabeth and Friedman, Samuel R and Nikolopoulos, Georgios},
  journal={The Annals of Applied Statistics},
  volume={17},
  number={3},
  pages={2165},
  year={2023}
}

@article{Ung2025,
  title={Generalizing and transporting causal inferences from randomized trials in the presence of trial engagement effects},
  author={Ung, Lawson and VanderWeele, Tyler J and Dahabreh, Issa J},
  journal={Epidemiology},
  volume={36},
  number={4},
  pages={500--510},
  year={2025},
  publisher={LWW}
}

@article{Tortú2024,
  title={Estimating Causal Effects of Multi-Valued Treatments Accounting for Network Interference: Immigration Policies and Crime Rates},
  author={Tort{\'u}, Costanza and Crimaldi, Irene and Mealli, Fabrizia and Forastiere, Laura},
  journal={Sociological Methods \& Research},
  volume={53},
  number={4},
  pages={1794--1828},
  year={2024},
  publisher={SAGE Publications Sage CA: Los Angeles, CA}
}

@article{Forastiere2021,
  title={Identification and estimation of treatment and interference effects in observational studies on networks},
  author={Forastiere, Laura and Airoldi, Edoardo M and Mealli, Fabrizia},
  journal={Journal of the American Statistical Association},
  volume={116},
  number={534},
  pages={901--918},
  year={2021},
  publisher={Taylor \& Francis}
}

@article{Huang2024,
  title={Sensitivity analysis for the generalization of experimental results},
  author={Huang, Melody Y},
  journal={Journal of the Royal Statistical Society Series A: Statistics in Society},
  volume={187},
  number={4},
  pages={900--918},
  year={2024},
  publisher={Oxford University Press UK}
}

@article{McNealis2024,
volume = {73},
year = {2024},
author = {McNealis, Vanessa and Moodie, Erica E M and Dean, Nema},
address = {England},
copyright = {The Royal Statistical Society 2024. All rights reserved. For permissions, please e-mail: journals.permissions@oup.com.},
issn = {0035-9254},
journal = {Journal of the Royal Statistical Society Series C: Applied Statistics},
language = {eng},
number = {3},
pages = {715-734},
publisher = {Oxford University Press},
title = {Revisiting the effects of maternal education on adolescents' academic performance: Doubly robust estimation in a network-based observational study},
}

@article{Tan2006,
author = {Tan, Zhiqiang},
address = {Alexandria, VA},
copyright = {American Statistical Association 2006},
issn = {0162-1459},
journal = {Journal of the American Statistical Association},
keywords = {Causality ; Data Analysis ; Inference ; Mathematics ; Medical sciences ; Observational Study ; Propensity Score ; Regression analysis ; Statistical methods ; Statistics},
language = {eng},
number = {476},
pages = {1619-1637},
publisher = {Taylor \& Francis},
title = {A Distributional Approach for Causal Inference Using Propensity Scores},
volume = {101},
year = {2006},
}

@article{Shen2011,
  title={Sensitivity analysis for causal inference using inverse probability weighting},
  author={Shen, Changyu and Li, Xiaochun and Li, Lingling and Were, Martin C},
  journal={Biometrical Journal},
  volume={53},
  number={5},
  pages={822--837},
  year={2011},
  publisher={Wiley Online Library}
}

@article{Rubin1974,
  title={Estimating causal effects of treatments in randomized and nonrandomized studies.},
  author={Rubin, Donald B},
  journal={Journal of Educational Psychology},
  volume={66},
  number={5},
  pages={688},
  year={1974},
  publisher={American Psychological Association}
}

@article{Rubin1990,
  title={{[On the application of probability theory to agricultural experiments. Essay on principles. Section 9.] Comment: Neyman (1923) and causal inference in experiments and observational studies}},
  author={Rubin, Donald B},
  journal={{Statistical Science}},
  volume={5},
  number={4},
  pages={472--480},
  year={1990},
  publisher={JSTOR}
}

@article{Liu2016,
  title={On inverse probability-weighted estimators in the presence of interference},
  author={Liu, Lan and Hudgens, Michael G and Becker-Dreps, Sylvia},
  journal={Biometrika},
  volume={103},
  number={4},
  pages={829--842},
  year={2016},
  publisher={Oxford University Press}
}

@article{TchetgenTchetgen2012,
  title={On causal inference in the presence of interference},
  author={Tchetgen Tchetgen, Eric J  and VanderWeele, Tyler J},
  journal={Statistical Methods in Medical Research},
  volume={21},
  number={1},
  pages={55--75},
  year={2012},
  publisher={SAGE Publications Sage UK: London, England}
}

@article{Hirano2001,
  title={Estimation of causal effects using propensity score weighting: An application to data on right heart catheterization},
  author={Hirano, Keisuke and Imbens, Guido W},
  journal={Health Services and Outcomes Research Methodology},
  volume={2},
  number={3},
  pages={259--278},
  year={2001},
  publisher={Springer}
}

@article{Rubin1980,
  title={Randomization analysis of experimental data: The Fisher randomization test comment},
  author={Rubin, Donald B},
  journal={Journal of the American Statistical Association},
  volume={75},
  number={371},
  pages={591--593},
  year={1980},
  publisher={JSTOR}
}

@article{Kim2025,
  title={Estimating spillover effects in the presence of isolated nodes},
  author={Kim, Bora},
  journal={Spatial Economic Analysis},
  pages={1--15},
  year={2025},
  publisher={Taylor \& Francis}
}

@article{Robins1992,
  title={Identifiability and exchangeability for direct and indirect effects},
  author={Robins, James M and Greenland, Sander},
  journal={Epidemiology},
  volume={3},
  number={2},
  pages={143--155},
  year={1992},
  publisher={LWW}
}

@article{Colnet2024,
  title={Causal inference methods for combining randomized trials and observational studies: a review},
  author={Colnet, B{\'e}n{\'e}dicte and Mayer, Imke and Chen, Guanhua and Dieng, Awa and Li, Ruohong and Varoquaux, Ga{\"e}l and Vert, Jean-Philippe and Josse, Julie and Yang, Shu},
  journal={Statistical Science},
  volume={39},
  number={1},
  pages={165--191},
  year={2024},
  publisher={Institute of Mathematical Statistics}
}

@article{Bhadra2025,
  title={Causal Inference Under Network Interference},
  author={Bhadra, Subhankar and Schweinberger, Michael},
  journal={arXiv preprint arXiv:2508.06808},
  year={2025}
}

@article{Hernán2011,
  title={Compound treatments and transportability of causal inference},
  author={Hern{\'a}n, Miguel A and VanderWeele, Tyler J},
  journal={Epidemiology},
  volume={22},
  number={3},
  pages={368--377},
  year={2011},
  publisher={LWW}
}

@article{Vuong2025,
  title={Systematic review of applied transportability and generalizability analyses: A landscape analysis},
  author={Vuong, Quang and Metcalfe, Rebecca K and Ling, Albee and Ackerman, Benjamin and Inoue, Kosuke and Park, Jay JH},
  journal={Annals of Epidemiology},
  year={2025},
  publisher={Elsevier}
}

@article{Mitra2022,
  title={The future of causal inference},
  author={Mitra, Nandita and Roy, Jason and Small, Dylan},
  journal={American Journal of Epidemiology},
  volume={191},
  number={10},
  pages={1671--1676},
  year={2022},
  publisher={Oxford University Press}
}

@article{Degtiar2023,
  title={A review of generalizability and transportability},
  author={Degtiar, Irina and Rose, Sherri},
  journal={Annual Review of Statistics and Its Application},
  volume={10},
  number={1},
  pages={501--524},
  year={2023},
  publisher={Annual Reviews}
}

@article{Bareinboim2016,
  title={Causal inference and the data-fusion problem},
  author={Bareinboim, Elias and Pearl, Judea},
  journal={Proceedings of the National Academy of Sciences},
  volume={113},
  number={27},
  pages={7345--7352},
  year={2016},
  publisher={National Academy of Sciences}
}

@article{VanderWeele2011Direct,
  title={Controlled direct and mediated effects: definition, identification and bounds},
  author={VanderWeele, Tyler J},
  journal={Scandinavian Journal of Statistics},
  volume={38},
  number={3},
  pages={551--563},
  year={2011},
  publisher={Wiley Online Library}
}

@article{Wu2025,
  title={A latent factor panel approach to spatiotemporal causal inference},
  author={Wu, Jiaxi and Franks, Alexander},
  journal={arXiv preprint arXiv:2509.10974},
  year={2025}
}

@article{Buchanan2023,
  title={Methods for assessing spillover in network-based studies of HIV/AIDS prevention among people who use drugs},
  author={Buchanan, Ashley L and Katenka, Natallia and Lee, Youjin and Wu, Jing and Pantavou, Katerina and Friedman, Samuel R and Halloran, M Elizabeth and Marshall, Brandon DL and Forastiere, Laura and Nikolopoulos, Georgios K},
  journal={Pathogens},
  volume={12},
  number={2},
  pages={326},
  year={2023},
  publisher={MDPI}
}

@article{Khot2025,
  title={Spatial Deconfounder: Interference-Aware Deconfounding for Spatial Causal Inference},
  author={Khot, Ayush and Oprescu, Miruna and Schr{\"o}der, Maresa and Kagawa, Ai and Luo, Xihaier},
  journal={arXiv preprint arXiv:2510.08762},
  year={2025}
}

@book{Rosenbaum2002,
author = {Rosenbaum, Paul R.},
address = {New York},
edition = {2nd ed.},
isbn = {0387989676},
keywords = {Experimental design ; Analysis of variance},
language = {eng},
lccn = {2001049264},
publisher = {Springer},
series = {Springer series in statistics},
title = {Observational studies},
year = {2002},
}
\bibliographystyle{apalike}

\appendix
\appendixpage

\section{Proofs}\label{appendixproofs}
\begin{proof}[Proof of Proposition \ref{unbiasedIPW}] We begin by considering $\mathbb{E} \big[Y^{(a)}|S=1 \big]$. We note that
\begin{align*}
    \mathbb{E} \big[Y^{(a)}|S=1 \big] &= \mathbb{E} \Bigl[ \mathbb{E} \big[Y^{(a)}\big|S=1, \tilde{\mathbf{X}}, \tilde{\mathbf{U}}\big] \Bigl|S=1 \Bigl] \\
    &= \mathbb{E} \Bigl[ \mathbb{E} \big[Y^{(a)}\big|S=1, A=a, \tilde{\mathbf{X}}, \tilde{\mathbf{U}}\big]\Bigl|S=1 \Bigl] \\
    &= \mathbb{E} \Bigl[ \mathbb{E} \big[Y\big|S=1, A=a, \tilde{\mathbf{X}}, \tilde{\mathbf{U}}\big]\Bigl|S=1 \Bigl] \\
    &= \int_{\tilde{\mathbf{x}}, \tilde{\mathbf{u}}, y} y \cdot p(y|S=1, A=a,\tilde{\mathbf{x}}, \tilde{\mathbf{u}}) \cdot p(\tilde{\mathbf{x}}, \tilde{\mathbf{u}} | S=1) \\
    &= \int_{\tilde{\mathbf{x}}, \tilde{\mathbf{u}}, y} y \cdot \frac{p(y, A=a, \tilde{\mathbf{x}}, \tilde{\mathbf{u}}|S=1)}{p(A=a|S=1, \tilde{\mathbf{x}}, \tilde{\mathbf{u}})} \\
    &= \int_{\tilde{\mathbf{x}}, \tilde{\mathbf{u}}, y, \tilde{a}} \mathbb{I}(\tilde{a}=a) \cdot y \cdot \frac{p(y, \tilde{a}, \tilde{\mathbf{x}}, \tilde{\mathbf{u}}|S=1)}{p(A=a|S=1, \tilde{\mathbf{x}}, \tilde{\mathbf{u}})} \\
    &= \mathbb{E} \bigg[ \frac{\mathbb{I}(A=a) \cdot Y}{p(A=a|S=1, \tilde{\mathbf{X}}, \tilde{\mathbf{U}})} \bigg| S=1 \bigg].
\end{align*}
Utilizing the argument above for $\mathbb{E} \big[Y^{(1)}|S=1 \big]$ and $\mathbb{E} \big[Y^{(0)}|S=1 \big]$ proves Proposition \ref{unbiasedIPW}.
\end{proof}

\begin{proof}[Proof of Proposition \ref{unbiasedIPW2}] First, we let $\mathfrak{X} = \mathbf{\tilde{X}} \cup \mathbf{\dot{X}} $ and $\mathfrak{U} = \mathbf{\tilde{U}} \cup \mathbf{\dot{U}}$, and note that by the definitions of the covariate sets, we have $Y^{(a)}\ind S|G, \mathfrak{X}, \mathfrak{U}$ and $Y^{(a)} \ind A | S,G, \mathfrak{X}, \mathfrak{U}$. Thus, we have
\begin{align*}
    \mathbb{E} \big[Y^{(a)}|S=2 \big] &= \mathbb{E} \Bigl[ \mathbb{E} \big[Y^{(a)}\big|S=2,G, \mathfrak{X}, \mathfrak{U}\big] \Bigl|S=2 \Bigl] \\
    &= \mathbb{E} \Bigl[ \mathbb{E} \big[Y^{(a)}\big|S=1,G, \mathfrak{X}, \mathfrak{U}\big] \Bigl|S=2 \Bigl] \\
    &= \mathbb{E} \Bigl[ \mathbb{E} \big[Y^{(a)}\big|S=1,A=a,G, \mathfrak{X}, \mathfrak{U}\big] \Bigl|S=2 \Bigl] \\
    &= \int_{g,\mathfrak{x},\mathfrak{u},y} y \cdot p(y|S=1, A=a, g, \mathfrak{x}, \mathfrak{u}) \cdot p(g,\mathfrak{x},\mathfrak{u}|S=2) \\
    &= \int_{g,\mathfrak{x},\mathfrak{u},y} y \cdot p(y|S=1, A=a, g, \mathfrak{x}, \mathfrak{u}) \cdot \frac{p(g, \mathfrak{x},\mathfrak{u}|S=2)}{p(g, \mathfrak{x},\mathfrak{u}|S=1)} \cdot p(g, \mathfrak{x},\mathfrak{u}|S=1) \\
    &= \int_{g,\mathfrak{x},\mathfrak{u},y} y \cdot \frac{p(y, A=a, g, \mathfrak{x}, \mathfrak{u}|S=1)}{p(A=a|S=1, g, \mathfrak{x},\mathfrak{u})} \cdot \frac{p(g, \mathfrak{x},\mathfrak{u}|S=2)}{p(g, \mathfrak{x},\mathfrak{u}|S=1)} \\ 
    &= \int_{g,\mathfrak{x},\mathfrak{u},y} y \cdot \frac{p(y, A=a, g, \mathfrak{x}, \mathfrak{u}|S=1)}{p(A=a|S=1, g, \mathfrak{x},\mathfrak{u})} \cdot \frac{p(S=2|g, \mathfrak{x},\mathfrak{u})}{p(S=1|g, \mathfrak{x},\mathfrak{u})} \cdot \frac{p(S=1)}{p(S=2)} \\
    &= \int_{g,\mathfrak{x},\mathfrak{u},y, \tilde{a}} \mathbb{I}(\tilde{a}=a) \cdot y \cdot \frac{p(y, \tilde{a}, g, \mathfrak{x}, \mathfrak{u}|S=1)}{p(A=a|S=1, g, \mathfrak{x},\mathfrak{u})} \cdot \frac{p(S=2|g, \mathfrak{x},\mathfrak{u})}{p(S=1|g, \mathfrak{x},\mathfrak{u})} \cdot \frac{p(S=1)}{p(S=2)}.
\end{align*}
Therefore, we have 
\begin{align*}
    \mathbb{E} \big[Y^{(a)}|S=2 \big] &= \mathbb{E} \Bigg[ \frac{\mathbb{I}(A=a) \cdot Y}{p(A=a|S=1,G,\mathfrak{X}, \mathfrak{U})}  \cdot \frac{p(S=2|G, \mathfrak{X}, \mathfrak{U}) }{p(S=1|G, \mathfrak{X}, \mathfrak{U}) } \cdot \frac{p(S=1)}{p(S=2)} \bigg|S=1 \Bigg] \\
    &= \mathbb{E} \Bigg[ \frac{\mathbb{I}(A=a) \cdot Y}{p(A=a|S=1,G,\mathbf{\tilde{X}}, \mathbf{\tilde{U}})}  \cdot \frac{p(S=2|G, \mathbf{\Breve{X}}, \mathbf{\Breve{U}}) }{p(S=1|G,\mathbf{\Breve{X}}, \mathbf{\Breve{U}}) } \cdot \frac{p(S=1)}{p(S=2)} \bigg|S=1 \Bigg] \\
    &= \mathbb{E} \Bigg[ \frac{\mathbb{I}(A=a) \cdot Y}{p(A=a|S=1,\mathbf{\tilde{X}}, \mathbf{\tilde{U}})}  \cdot \frac{p(S=2|G, \mathbf{\Breve{X}}, \mathbf{\Breve{U}}) }{p(S=1|G,\mathbf{\Breve{X}}, \mathbf{\Breve{U}}) } \cdot \frac{p(S=1)}{p(S=2)} \bigg|S=1 \Bigg].
\end{align*}
Using the argument above for $\mathbb{E} \big[Y^{(1)}|S=2 \big]$ and $\mathbb{E} \big[Y^{(0)}|S=2 \big]$ completes the proof.
\end{proof}

\begin{proof}[Proof of Theorem \ref{thmbiasdecomposition1}] We will begin by writing
\begin{align*}
    \mathbb{E} \bigg[ \frac{\mathbb{I}(A=a) \cdot Y}{p(A=a|S=1, \mathbf{X}_{AY})} \bigg|S=1\bigg] &= \mathbb{E} \bigg[ \frac{\mathbb{I}(A=a) \cdot Y^{(a)}}{p(A=a|S=1, \mathbf{X}_{AY})} \bigg|S=1\bigg] \\
    &=\mathbb{E} \Bigg[ \mathbb{E} \bigg[\frac{\mathbb{I}(A=a) \cdot Y^{(a)}}{p(A=a|S=1, \mathbf{X}_{AY})}\bigg|S=1, \tilde{\mathbf{X}}, \tilde{\mathbf{U}}, Y^{(a)}\bigg]\Bigg|S=1 \Bigg] \\
    &=\mathbb{E} \bigg[ \frac{Y^{(a)}}{p(A=a|S=1, \mathbf{X}_{AY})} \cdot \mathbb{E} \big[ \mathbb{I}(A=a) \big| S=1, \tilde{\mathbf{X}}, \tilde{\mathbf{U}}, Y^{(a)}\big]\bigg|S=1 \bigg] \\
    &= \mathbb{E} \bigg[ \frac{Y^{(a)}}{p(A=a|S=1, \mathbf{X}_{AY})} \cdot \mathbb{E} \big[ \mathbb{I}(A=a) | S=1, \tilde{\mathbf{X}}, \tilde{\mathbf{U}}\big]\bigg|S=1 \bigg] \\
    &= \mathbb{E} \bigg[Y^{(a)} \cdot  \frac{p(A=a|S=1, \tilde{\mathbf{X}}, \tilde{\mathbf{U}})}{p(A=a|S=1, \mathbf{X}_{AY})} \bigg| S=1  \bigg] = \mathbb{E} \big[Y^{(a)} \cdot  \varepsilon_a \big| S = 1 \big].
\end{align*}
Next, note that since $\mathbb{E}[ \varepsilon_a | S = 1 ] = 1$, we can write:
\begin{align*}
\mathbb{E} \bigg[ \frac{\mathbb{I}(A=a) \cdot Y}{p(A=a|S=1, \mathbf{X}_{AY})} \bigg|S=1\bigg] - \mathbb{E} \big[Y^{(a)} \big|  S=1 \big] &= \mathbb{E} \big[Y^{(a)} \cdot  \varepsilon_a \big| S = 1 \big] - \mathbb{E} \big[Y^{(a)} \big|  S=1 \big] \\
&= \mathbb{E} \bigg[Y^{(a)} \cdot (\varepsilon_a - 1) \bigg|  S=1 \bigg] \\
&= \text{Cov}\big(Y^{(a)}, \varepsilon_a \big| S = 1 \big) \\
&=\text{Cov}\big(Y^{(a,0)}, \varepsilon_a | S = 1  \big) + \text{Cov}\big(\gamma(a), \varepsilon_a \big|S=1 \big)
\end{align*}
Going through the above argument for $A=0$ and $A=1$ proves that $\psi - \phi_1$ can be written as in Theorem 1.
\end{proof}
\begin{proof}[Proof of Theorem \ref{thmbiasdecomposition2}] We have already shown that $\psi - \phi_{1} = T_1 + T_2$. Thus, we only have to show that $\phi_{1} - \phi_{2} = T_3 +T_4$. We let  $\mathscr{X} = \mathbf{\tilde{X}} \cup \mathbf{\breve{X}} $ and $\mathscr{U} = \mathbf{\tilde{U}} \cup \mathbf{\breve{U}}$ and note that by the results of Proposition \ref{unbiasedIPW2}, we have
\begin{center}
\resizebox{1\textwidth}{!}{
\begin{minipage}{\textwidth}
\begin{align*}
    \mathbb{E}[Y^{(a)}|S=2] &=  \mathbb{E} \Bigg[\frac{\mathbb{I}(A=a) \cdot Y}{p(A=a|S=1,\mathbf{\tilde{X}}, \mathbf{\tilde{U}})}  \cdot \frac{p(S=2|G, \mathbf{\Breve{X}}, \mathbf{\Breve{U}}) }{p(S=1|G,\mathbf{\Breve{X}}, \mathbf{\Breve{U}}) } \cdot \frac{p(S=1)}{p(S=2)} \bigg|S=1 \Bigg]  \\
    &= \mathbb{E} \Bigg[\frac{\mathbb{I}(A=a) \cdot Y^{(a)}}{p(A=a|S=1,\mathbf{\tilde{X}}, \mathbf{\tilde{U}})}  \cdot \frac{p(S=2|G, \mathbf{\Breve{X}}, \mathbf{\Breve{U}}) }{p(S=1|G,\mathbf{\Breve{X}}, \mathbf{\Breve{U}}) } \cdot \frac{p(S=1)}{p(S=2)} \bigg|S=1 \Bigg] \\
    &= \mathbb{E} \Bigg[ \mathbb{E}\bigg[\frac{\mathbb{I}(A=a) \cdot Y^{(a)}}{p(A=a|S=1,\mathbf{\tilde{X}}, \mathbf{\tilde{U}})}  \cdot \frac{p(S=2|G, \mathbf{\Breve{X}}, \mathbf{\Breve{U}}) }{p(S=1|G,\mathbf{\Breve{X}}, \mathbf{\Breve{U}}) } \cdot \frac{p(S=1)}{p(S=2)} \bigg|S=1, Y^{(a)}, G,  \mathscr{X}, \mathscr{U} \bigg] \bigg|S=1 \Bigg] \\
    &= \mathbb{E} \Bigg[ \frac{Y^{(a)}}{p(A=a|S=1,\mathbf{\tilde{X}}, \mathbf{\tilde{U}})}  \cdot \frac{p(S=2|G, \mathbf{\Breve{X}}, \mathbf{\Breve{U}}) }{p(S=1|G,\mathbf{\Breve{X}}, \mathbf{\Breve{U}}) } \cdot \frac{p(S=1)}{p(S=2)} \cdot \mathbb{E}\bigg[\mathbb{I}(A=a) \bigg|S=1, Y^{(a)}, G,  \mathscr{X}, \mathscr{U} \bigg] \bigg|S=1 \Bigg] \\
    &= \mathbb{E} \Bigg[ \frac{Y^{(a)}}{p(A=a|S=1,\mathbf{\tilde{X}}, \mathbf{\tilde{U}})}  \cdot \frac{p(S=2|G, \mathbf{\Breve{X}}, \mathbf{\Breve{U}}) }{p(S=1|G,\mathbf{\Breve{X}}, \mathbf{\Breve{U}}) } \cdot \frac{p(S=1)}{p(S=2)} \cdot \mathbb{E}\bigg[\mathbb{I}(A=a) \bigg|S=1, Y^{(a)}, G,  \mathbf{\tilde{X}}, \mathbf{\tilde{U}} \bigg] \bigg|S=1 \Bigg] \\
    &= \mathbb{E} \Bigg[ \frac{Y^{(a)}}{p(A=a|S=1,\mathbf{\tilde{X}}, \mathbf{\tilde{U}})}  \cdot \frac{p(S=2|G, \mathbf{\Breve{X}}, \mathbf{\Breve{U}}) }{p(S=1|G,\mathbf{\Breve{X}}, \mathbf{\Breve{U}}) } \cdot \frac{p(S=1)}{p(S=2)} \cdot \mathbb{E}\bigg[\mathbb{I}(A=a) \bigg|S=1, \mathbf{\tilde{X}}, \mathbf{\tilde{U}} \bigg] \bigg|S=1 \Bigg] \\
    &= \mathbb{E} \Bigg[ Y^{(a)}  \cdot \frac{p(S=2|G, \mathbf{\Breve{X}}, \mathbf{\Breve{U}}) }{p(S=1|G,\mathbf{\Breve{X}}, \mathbf{\Breve{U}}) } \cdot \frac{p(S=1)}{p(S=2)} \bigg|S=1 \Bigg].
\end{align*}
\end{minipage}
}
\end{center}
Therefore, we can write 
\begin{align*}
    \mathbb{E}[Y^{(a)}|S=1] - \mathbb{E}[Y^{(a)}|S=2] &= \mathbb{E} \Bigg[ Y^{(a)}  \cdot \bigg(1 -\frac{p(S=2|G, \mathbf{\Breve{X}}, \mathbf{\Breve{U}}) }{p(S=1|G,\mathbf{\Breve{X}}, \mathbf{\Breve{U}}) } \cdot \frac{p(S=1)}{p(S=2)} \bigg) \bigg|S=1 \Bigg] \\
    &= \textup{Cov}\big(Y^{(a)}, \tilde{\varepsilon}_S \big|S=1 \big) \\
    &= \textup{Cov}\big(Y^{(a,0)}, \tilde{\varepsilon}_S\big| S=1\big)+\textup{Cov}\big(\gamma(a), \tilde{\varepsilon}_S \big|S=1\big).
\end{align*}
We can use the argument above for both values of $A$ to show that $\phi_1 - \phi_2 = T_3 + T_4$, completing the proof.
\end{proof}
\begin{proof}[Proof of Proposition \ref{propositionoutcomegeneratingequation}] To prove Proposition \ref{propositionoutcomegeneratingequation}, we will develop a different bias decomposition for $\phi_1 - \phi_2$. We note that
\begin{align*}
    \mathbb{E}[Y^{(1)} - Y^{(0)}|S=s] &= \mathbb{E}\bigg[\frac{\mathbb{I}(S=s) \cdot (Y^{(1)} - Y^{(0)})}{p(S=s)}\bigg] \\
    &= \sum_{g=0}^{g_{\text{max}}} \mathbb{E} \bigg[\frac{\mathbb{I}(S=s, G=g) \cdot \tau(g)}{p(S=s)}\bigg].
\end{align*}
Next, since we assumed that $Y$ and $S$ share no common causes, and $S$ only has a direct causal effect on the personal and neighborhood treatments, we have the following two relationships: (i) $S, G \ind \tau(g) | \mathbf{X}_{GY}, \mathbf{U}_{GY}$, and (ii) $S \ind \mathbf{X}_{GY}, \mathbf{U}_{GY}$. Thus, we can write:

\begin{align*}
    \mathbb{E}[Y^{(1)} - Y^{(0)}|S=s] &= \sum_{g=0}^{g_{\text{max}}} \mathbb{E} \Bigg[ \mathbb{E} \bigg[ \frac{\mathbb{I}(S=s, G=g) \cdot \tau(g)}{p(S=s)} \Bigg| \tau(g), \mathbf{X}_{GY}, \mathbf{U}_{GY}\bigg] \Bigg] \\
    &= \sum_{g=0}^{g_{\text{max}}} \mathbb{E} \Bigg[ \frac{\tau(g)}{p(S=s)} \cdot \mathbb{E}  \bigg[ \mathbb{I}(S=s, G=g) \bigg| \tau(g), \mathbf{X}_{GY}, \mathbf{U}_{GY}\bigg] \Bigg] \\
    &= \sum_{g=0}^{g_{\text{max}}} \mathbb{E} \Bigg[ \frac{\tau(g)}{p(S=s)} \cdot \mathbb{E}  \bigg[ \mathbb{I}(S=s, G=g) \bigg| \mathbf{X}_{GY}, \mathbf{U}_{GY}\bigg] \Bigg] \\
    &= \sum_{g=0}^{g_{\text{max}}} \mathbb{E} \Bigg[ \frac{\tau(g)}{p(S=s)} \cdot p(S=s, G=g| \mathbf{X}_{GY}, \mathbf{U}_{GY}) \Bigg] \\
    &= \sum_{g=0}^{g_{\text{max}}} \mathbb{E} \Bigg[ \frac{\tau(g)}{p(S=s)} \cdot p(G=g|S=s, \mathbf{X}_{GY}, \mathbf{U}_{GY}) \cdot p(S=s |\mathbf{X}_{GY}, \mathbf{U}_{GY}) \Bigg] \\
    &= \sum_{g=0}^{g_{\text{max}}} \mathbb{E} \Bigg[ \tau(g)\cdot p(G=g|S=s, \mathbf{X}_{GY}, \mathbf{U}_{GY})\Bigg] \\
    &= \sum_{g=0}^{g_{\text{max}}} \text{Cov} \big( \tau(g), p(G=g|S=s, \mathbf{X}_{GY}, \mathbf{U}_{GY}) \big) \\
    &+  \sum_{g=0}^{g_{\text{max}}} \mathbb{E}\big[ \tau(g) \big] \cdot \mathbb{E} \big[p(G=g|S=s, \mathbf{X}_{GY}, \mathbf{U}_{GY}) \big] \\
    &= \sum_{g=0}^{g_{\text{max}}} \text{Cov} \big( \tau(g), p(G=g|S=s, \mathbf{X}_{GY}, \mathbf{U}_{GY}) \big) \\
    &+  \sum_{g=0}^{g_{\text{max}}} \mathbb{E}\big[ \tau(g) \big] \cdot p(G=g|S=s) 
\end{align*}
Utilizing the above argument for $S=1$ and $S=2$, we can write 
\begin{align*}
    \phi_1 -\phi_2 &= \sum_{g=0}^{g_{\text{max}}} \text{Cov} \big( \tau(g), p(G=g|S=1, \mathbf{X}_{GY}, \mathbf{U}_{GY}) -  p(G=g|S=2, \mathbf{X}_{GY}, \mathbf{U}_{GY})\big) \\
    &+  \sum_{g=0}^{g_{\text{max}}} \mathbb{E}\big[ \tau(g) \big] \cdot \big( p(G=g|S=1) - p(G=g|S=2) \big).
\end{align*}

Next, we note that since we assumed that $G$ is discrete, for $g \in \{1,2, ..., g_{\text{max}}\}$ and $s \in \{1,2\}$, we have $p(G=g|S=s) = p(G \leq g | S=s) - p(G \leq g-1 |S =s)$ and $p(G=g|S=s, \mathbf{X}_{GY}, \mathbf{U}_{GY}) = p(G \leq g | S=s, \mathbf{X}_{GY}, \mathbf{U}_{GY}) - p(G \leq g-1 |S =s, \mathbf{X}_{GY}, \mathbf{U}_{GY})$. Thus, for $s \in \{1,2\}$,
{\footnotesize
\begin{align*}
    \sum_{g=0}^{g_{\text{max}}} \text{Cov}\bigg(\tau(g), p(G=g|S=s, \mathbf{X}_{GY}, \mathbf{U}_{GY})\bigg) = \sum_{g=1}^{g_{\text{max}}} \text{Cov}\bigg(\tau(g-1) - \tau(g), p(G\leq g-1|S=s, \mathbf{X}_{GY}, \mathbf{U}_{GY})\bigg).
\end{align*}
}
Similarly, for $s \in \{1,2\}$, we have 
{\small
\begin{align*}
    \sum_{g=0}^{g_{\text{max}}}\mathbb{E}[\tau(g)]\cdot p(G=g|S=s) &= \sum_{g=1}^{g_{\text{max}}} \bigg\{ \mathbb{E}\big[\tau(g-1) - \tau(g)\big] \cdot p(G \leq g-1| S=s) \bigg\} + \mathbb{E}\big[\tau(g_{max})\big].
\end{align*}
}
Finally, if $Y= f_0(\mathbf{X}, \mathbf{U}) + f_1(A, \mathbf{X}, \mathbf{U}) + f_2(G, \mathbf{X}, \mathbf{U}) + \epsilon(A,G)$, for any $g\in\{1, 2, ..., g_{\text{max}}\}$, $\text{Cov}\big(\tau(g-1) - \tau(g), p(G\leq g-1|S=s, \mathbf{X}_{GY}, \mathbf{U}_{GY})\big) = 0$ and $ \mathbb{E}\big[\tau(g-1) - \tau(g)\big] = 0$. Thus, $\phi_1 - \phi_2 = \mathbb{E}\big[\tau(g_{max})\big] - \mathbb{E}\big[\tau(g_{max})\big] = 0$, completing the proof.
\end{proof}

\begin{proof}[Proof of Proposition \ref{unbiasedIPW1}] We begin by proving the identification formula for $\phi_1$. We note that
    \begin{align*}
    \mathbb{E} \big[Y^{(a)}|S=1, N=n \big] &= \mathbb{E} \Bigl[ \mathbb{E} \big[Y^{(a)}\big|S=1, N=n, \tilde{\mathbf{X}}, \tilde{\mathbf{U}}\big] \Bigl|S=1, N=n\Bigl] \\
    &= \mathbb{E} \Bigl[ \mathbb{E} \big[Y^{(a)}\big|S=1, N=n, A=a, \tilde{\mathbf{X}}, \tilde{\mathbf{U}}\big]\Bigl|S=1, N=n \Bigl] \\
    &= \mathbb{E} \Bigl[ \mathbb{E} \big[Y\big|S=1, N=n, A=a, \tilde{\mathbf{X}}, \tilde{\mathbf{U}}\big]\Bigl|S=1, N=n \Bigl] \\
    &= \int_{\tilde{\mathbf{x}}, \tilde{\mathbf{u}}, y} y \cdot p(y|S=1, N=n, A=a,\tilde{\mathbf{x}}, \tilde{\mathbf{u}}) \cdot p(\tilde{\mathbf{x}}, \tilde{\mathbf{u}} | S=1, N=n) \\
    &= \int_{\tilde{\mathbf{x}}, \tilde{\mathbf{u}}, y} y \cdot \frac{p(y, A=a, \tilde{\mathbf{x}}, \tilde{\mathbf{u}}|S=1, N=n)}{p(A=a|S=1, N=n, \tilde{\mathbf{x}}, \tilde{\mathbf{u}})} \\
    &= \int_{\tilde{\mathbf{x}}, \tilde{\mathbf{u}}, y, \tilde{a}} \mathbb{I}(\tilde{a}=a) \cdot y \cdot \frac{p(y, \tilde{a}, \tilde{\mathbf{x}}, \tilde{\mathbf{u}}|S=1, N=n)}{p(A=a|S=1, N=n, \tilde{\mathbf{x}}, \tilde{\mathbf{u}})} \\
    &= \mathbb{E} \bigg[ \frac{\mathbb{I}(A=a) \cdot Y}{p(A=a|S=1, N=n, \tilde{\mathbf{X}}, \tilde{\mathbf{U}})} \bigg| S=1, N=n \bigg] \\
    &= \mathbb{E} \bigg[ \frac{\mathbb{I}(A=a) \cdot Y}{p(A=a|S=1, \tilde{\mathbf{X}}, \tilde{\mathbf{U}})} \bigg| S=1, N=n \bigg].
\end{align*}
The derivation above proves the first half of Proposition \ref{unbiasedIPW1} by the law of total expectation. Next, we prove our identification result for $\phi_2$. We can write 
\begin{center}
\resizebox{0.9\textwidth}{!}{
\begin{minipage}{\textwidth}
\begin{align*}
    \mathbb{E} \big[Y^{(a)}|S=2, N=n \big] &= \mathbb{E} \Bigl[ \mathbb{E} \big[Y^{(a)}\big|S=2, N=n, G, \mathfrak{X}, \mathfrak{U}\big] \Bigl|S=2, N=n \Bigl] \\
    &= \mathbb{E} \Bigl[ \mathbb{E} \big[Y^{(a)}\big|S=1, N=n, G, \mathfrak{X}, \mathfrak{U}\big] \Bigl|S=2,N=n \Bigl] \\
    &= \mathbb{E} \Bigl[ \mathbb{E} \big[Y^{(a)}\big|S=1 ,N=n, A=a,G, \mathfrak{X}, \mathfrak{U}\big] \Bigl|S=2, N=n \Bigl] \\
    &= \int_{g,\mathfrak{x},\mathfrak{u},y} y \cdot p(y|S=1, N=n, A=a, g, \mathfrak{x}, \mathfrak{u}) \cdot p(g,\mathfrak{x},\mathfrak{u}|S=2, N=n) \\
    &= \int_{g,\mathfrak{x},\mathfrak{u},y} y \cdot \frac{p(y, A=a, g, \mathfrak{x}, \mathfrak{u}|S=1, N=n)}{p(A=a|S=1,N=n, g, \mathfrak{x},\mathfrak{u})} \cdot \frac{p(g, \mathfrak{x},\mathfrak{u}|S=2,N=n)}{p(g, \mathfrak{x},\mathfrak{u}|S=1,N=n)} \\ 
    &= \int_{g,\mathfrak{x},\mathfrak{u},y} y \cdot \frac{p(y, A=a, g, \mathfrak{x}, \mathfrak{u}|S=1, N=n)}{p(A=a|S=1, N=n, g, \mathfrak{x},\mathfrak{u})} \cdot \frac{p(S=2 |N=n, g, \mathfrak{x},\mathfrak{u})}{p(S=1| N=n, g, \mathfrak{x},\mathfrak{u})} \cdot \frac{p(S=1|N=n)}{p(S=2| N=n)} \\
    &= \int_{g,\mathfrak{x},\mathfrak{u},y, \tilde{a}} \mathbb{I}(\tilde{a}=a) \cdot y \cdot \frac{p(y, \tilde{a}, g, \mathfrak{x}, \mathfrak{u}|S=1, N=n)}{p(A=a|S=1, N=n, g, \mathfrak{x},\mathfrak{u})} \cdot \frac{p(S=2|N=n, g, \mathfrak{x},\mathfrak{u})}{p(S=1|N=n,g, \mathfrak{x},\mathfrak{u})} \cdot \frac{p(S=1|N=n)}{p(S=2|N=n)} \\
    &= \mathbb{E} \Bigg[ \frac{\mathbb{I}(A=a) \cdot Y}{p(A=a|S=1,N=n, G,\mathfrak{X}, \mathfrak{U})}  \cdot \frac{p(S=2|N=n, G, \mathfrak{X}, \mathfrak{U}) }{p(S=1|N=n, G, \mathfrak{X}, \mathfrak{U}) } \cdot \frac{p(S=1|N=n)}{p(S=2|N=n)} \bigg|S=1, N=n \Bigg] \\
    &= \mathbb{E} \Bigg[ \frac{\mathbb{I}(A=a) \cdot Y}{p(A=a|S=1,N=n,\mathbf{\tilde{X}}, \mathbf{\tilde{U}})}  \cdot \frac{p(S=2|N=n,G, \mathbf{\Breve{X}}, \mathbf{\Breve{U}}) }{p(S=1|N=n, G,\mathbf{\Breve{X}}, \mathbf{\Breve{U}}) } \cdot \frac{p(S=1|N=n)}{p(S=2|N=n)} \bigg|S=1, N=n \Bigg] \\
    &= \mathbb{E} \Bigg[ \frac{\mathbb{I}(A=a) \cdot Y}{p(A=a|S=1, \mathbf{\tilde{X}}, \mathbf{\tilde{U}})}  \cdot \frac{p(S=2|G, \mathbf{\Breve{X}}, \mathbf{\Breve{U}}) }{p(S=1| G,\mathbf{\Breve{X}}, \mathbf{\Breve{U}}) } \cdot \frac{p(S=1)}{p(S=2)} \bigg|S=1, N=n \Bigg].
    \end{align*}
\end{minipage}
}
\end{center}
The argument above can be used for $A=0$ and $A=1$, which, when combined with the law of total expectation, completes the proof of the identification formula for $\phi_2$.
\end{proof}

\begin{proof}[Proof of Theorem \ref{thmupdatedbiasdecomposition}]
 We will begin by proving the formula of the bias of $\hat{\psi}$ for $\phi_1$. We note that
 \begin{center}
\resizebox{0.9\textwidth}{!}{
\begin{minipage}{\textwidth}
 \begin{align*}
    \mathbb{E} \bigg[ \frac{\mathbb{I}(A=a) \cdot Y}{p(A=a|S=1, \mathbf{X}_{AY})} \bigg|S=1\bigg] &= \sum_{n=0}^{g_{\textup{max}}} \mathbb{E} \bigg[ \frac{\mathbb{I}(A=a) \cdot Y}{p(A=a|S=1, \mathbf{X}_{AY})} \bigg|S=1, N = n\bigg] \cdot p(N=n|S=1) \\
    &= \sum_{n=0}^{g_{\textup{max}}} \mathbb{E} \bigg[ \frac{\mathbb{I}(A=a) \cdot Y^{(a)}}{p(A=a|S=1, \mathbf{X}_{AY})} \bigg|S=1, N = n\bigg] \cdot p(N=n|S=1) \\
    &=\sum_{n=0}^{g_{\textup{max}}}\mathbb{E} \Bigg[ \mathbb{E} \bigg[\frac{\mathbb{I}(A=a) \cdot Y^{(a)}}{p(A=a|S=1, \mathbf{X}_{AY})}\bigg|S=1, N=n, \tilde{\mathbf{X}}, \tilde{\mathbf{U}}, Y^{(a)}\bigg]\Bigg|S=1, N=n \Bigg] \cdot p(N=n|S=1) \\
    &= \sum_{n=0}^{g_{\textup{max}}} \mathbb{E} \bigg[ \frac{Y^{(a)}}{p(A=a|S=1, \mathbf{X}_{AY})} \cdot \mathbb{E} \big[ \mathbb{I}(A=a) \big| S=1, N=n, \tilde{\mathbf{X}}, \tilde{\mathbf{U}}, Y^{(a)}\big]\bigg|S=1, N=n \bigg] \cdot p(N=n|S=1) \\
    &= \sum_{n=0}^{g_{\textup{max}}} \mathbb{E} \bigg[Y^{(a)} \cdot  \frac{p(A=a|S=1, N=n, \tilde{\mathbf{X}}, \tilde{\mathbf{U}})}{p(A=a|S=1, \mathbf{X}_{AY})} \bigg| S=1, N=n  \bigg] \cdot p(N=n|S=1) \\
    &= \sum_{n=0}^{g_{\textup{max}}} \mathbb{E} \bigg[Y^{(a)} \cdot  \frac{p(A=a|S=1, \tilde{\mathbf{X}}, \tilde{\mathbf{U}})}{p(A=a|S=1, \mathbf{X}_{AY})} \bigg| S=1, N=n  \bigg] \cdot p(N=n|S=1).
\end{align*}
\end{minipage}
}
\end{center}
Moreover, we can write
\begin{center}
\resizebox{0.9\textwidth}{!}{
\begin{minipage}{\textwidth}
\begin{align*}
\mathbb{E} \bigg[ \frac{\mathbb{I}(A=a) \cdot Y}{p(A=a|S=1, \mathbf{X}_{AY})} \bigg|S=1, N=n\bigg] - \mathbb{E} \big[Y^{(a)} \big|  S=1, N=n \big] &= \mathbb{E} \big[Y^{(a)} \cdot  \varepsilon_a \big| S = 1, N=n \big] - \mathbb{E} \big[Y^{(a)} \big|  S=1, N=n \big] \\
&= \mathbb{E} \bigg[Y^{(a)} \cdot (\varepsilon_a - 1) \bigg|  S=1, N=n \bigg] \\
&= \text{Cov}\big(Y^{(a)}, \varepsilon_a \big| S = 1, N=n \big) \\
&=\text{Cov}\big(Y^{(a,0)}, \varepsilon_a | S = 1, N=n  \big) \\
&+ \text{Cov}\big(\gamma(a), \varepsilon_a \big|S=1, N=n \big).
\end{align*}
\end{minipage}
}
\end{center}
Using the fact that $\sum\limits_{n=0}^{g_{\text{max}}} p(N=n|S=1) \cdot \text{Cov}\big(Y^{(a,0)}, \varepsilon_a | S = 1, N=n  \big) = \text{Cov}\big(Y^{(a,0)}, \varepsilon_a | S = 1\big)$ and combining the two results above for both values of $A$ proves the first half of Theorem \ref{thmupdatedbiasdecomposition}. Next, we prove the second part of Theorem \ref{thmupdatedbiasdecomposition}, which represents the difference between $\phi_1$ and $\phi_2$. We note that by Proposition \ref{unbiasedIPW1},
\begin{center}
\resizebox{0.9\textwidth}{!}{
\begin{minipage}{\textwidth}
\begin{align*}
    \mathbb{E} \big[Y^{(a)}|S=2, N=n \big] &= \mathbb{E} \Bigg[ \frac{\mathbb{I}(A=a) \cdot Y^{(a)}}{p(A=a|S=1,\mathbf{\tilde{X}}, \mathbf{\tilde{U}})}  \cdot \frac{p(S=2|G, \mathbf{\Breve{X}}, \mathbf{\Breve{U}}) }{p(S=1| G,\mathbf{\Breve{X}}, \mathbf{\Breve{U}}) } \cdot \frac{p(S=1)}{p(S=2)} \bigg|S=1, N=n \Bigg] \\
    &= \mathbb{E} \Bigg[ \mathbb{E} \bigg[ \frac{\mathbb{I}(A=a) \cdot Y^{(a)}}{p(A=a|S=1,\mathbf{\tilde{X}}, \mathbf{\tilde{U}})}  \cdot \frac{p(S=2|G, \mathbf{\Breve{X}}, \mathbf{\Breve{U}}) }{p(S=1|G,\mathbf{\Breve{X}}, \mathbf{\Breve{U}}) } \cdot \frac{p(S=1)}{p(S=2)}\bigg| S=1, N=n, Y^{(a)}, G,  \mathscr{X}, \mathscr{U}\bigg] \bigg|S=1, N=n \Bigg] \\
    &=  \mathbb{E} \Bigg[ Y^{(a)}  \cdot \frac{p(S=2|G, \mathbf{\Breve{X}}, \mathbf{\Breve{U}}) }{p(S=1| G,\mathbf{\Breve{X}}, \mathbf{\Breve{U}}) } \cdot \frac{p(S=1)}{p(S=2)} \bigg|S=1, N=n \Bigg].
\end{align*}
\end{minipage}
}
\end{center}
Thus, we have that
\begin{align*}
    \phi_{2} = \sum_{n=0}^{g_{\text{max}}} &\mathbb{E} \Bigg[ \kappa \cdot \frac{p(S=2| G, \mathbf{\breve{X}}, \mathbf{\breve{U}}) }{p(S=1|G, \mathbf{\breve{X}}, \mathbf{\breve{U}}) } \cdot \frac{p(S=1)}{p(S=2)} \bigg| S=1, N=n \Bigg] \cdot p(N=n|S=1).
\end{align*}
Next, we note that 
\begin{center}
\resizebox{1\textwidth}{!}{
\begin{minipage}{\textwidth}
\begin{align*}
    \mathbb{E}\bigg[\kappa\bigg|S=1, N=n\bigg] - \mathbb{E}\bigg[\kappa \cdot \frac{p(S=2| G, \mathbf{\breve{X}}, \mathbf{\breve{U}}) }{p(S=1|G, \mathbf{\breve{X}}, \mathbf{\breve{U}}) } \cdot \frac{p(S=1)}{p(S=2)}\bigg|S=1, N=n\bigg] &= \mathbb{E} \Bigg[ \kappa  \cdot \bigg(1 -\frac{p(S=2|G, \mathbf{\Breve{X}}, \mathbf{\Breve{U}}) }{p(S=1|G,\mathbf{\Breve{X}}, \mathbf{\Breve{U}}) } \cdot \frac{p(S=1)}{p(S=2)} \bigg) \bigg|S=1, N=n \Bigg] \\
    &= \textup{Cov}\big(\kappa, \tilde{\varepsilon}_S \big|S=1, N=n \big)  \\
    &= \textup{Cov}\big(\tau(0), \tilde{\varepsilon}_S\big| S=1, N=n\big)+\textup{Cov}\big(\gamma(1) - \gamma(0), \tilde{\varepsilon}_S \big|S=1, N=n\big), 
\end{align*}
\end{minipage}
}
\end{center}
which, when combined with the fact that $\sum\limits_{n=0}^{g_{\text{max}}} p(N=n|S=1) \cdot \textup{Cov}(\tau(0), \tilde{\varepsilon}_S|S=1, N=n) = \text{Cov}(\tau(0), \tilde{\varepsilon}_S | S = 1)$, completes the proof.
\end{proof}

\end{document}